\newcommand{\prob}[1]{\textsc{LS-CSP$(#1)$}}
\newcommand{\ov}{\overline}
\def\zd{,\ldots,}
\newcommand{\proj}{\textup{pr}}
\newcommand{\dist}{\textup{dist}}
\newcommand{\CSP}[1]{\mbox{\rm CSP$(#1)$}}
\newcommand{\PTIME}{\mbox{\rm PTIME}}
\newcommand{\NP}{\mbox{\rm NP}}
\begin{document}

\title{On the hardness of losing weight}
\titlerunning{On the hardness of losing weight}

\author{Andrei Krokhin\inst{1} \and D\'{a}niel Marx\inst{2}}

\authorrunning{A.~Krokhin and D.~Marx}

\institute{Department of Computer Science, Durham University, Durham, DH1 3LE, UK
\email{andrei.krokhin@durham.ac.uk}\\
\and Department of Computer Science and Information Theory, Budapest University of
Technology and Economics, H-1521 Budapest, Hungary\\ \email{dmarx@cs.bme.hu}}

\maketitle

\begin{abstract}
  We study the complexity of local search for the Boolean constraint
  satisfaction problem (CSP), in the following form: given a CSP
  instance, that is, a collection of constraints, and a solution to
  it, the question is whether there is a better (lighter, i.e., having
  strictly less Hamming weight) solution
  within a given distance from the initial solution. We classify the
  complexity, both classical and parameterized, of such problems by a
  Schaefer-style dichotomy result, that is, with a restricted set of
  allowed types of constraints. Our results show that there is a
  considerable amount of such problems that are NP-hard, but
  fixed-parameter tractable when parameterized by the distance.
\end{abstract}

\section{Introduction}
Local search is one of the most widely used approaches to solving hard
optimization problems. The basic idea of local search is that one
tries to iteratively improve a current solution by searching for
better solutions in its ($k$-)neighborhood (i.e., within distance $k$
from it). Any optimization algorithm can be followed by a local search
phase, thus the problem of finding a better solution locally is of great
practical interest.  However, a brute force search of a
$k$-neighborhood is not feasible for large $k$, thus it is natural to study
the complexity of searching the $k$-neighborhood.

The constraint satisfaction problem (CSP) provides a framework in which it is
possible to express, in a natural way, many combinatorial problems encountered in
artificial intelligence and computer science. A constraint satisfaction problem is
represented by a set of variables, a domain of values for each variable, and a set
of constraints between variables. The basic aim in a constraint satisfaction problem
is then to find an assignment of values to the variables that satisfies the
constraints. Boolean CSP (when all variables have domain $\{0,1\}$), aka generalized
satisfiability, is a natural generalization of {\sc Sat} where constraints are given
by arbitrary relations, not necessarily by clauses. Local search methods for {\sc
Sat} and CSP are very extensively studied (see,
e.g.,~\cite{Dantsin02:local,Gu00:book,Hirsch00:local,Hoos06:handbook}).

Complexity classifications for various versions of CSP (and, in particular, for
versions of Boolean CSP) have recently attracted massive attention from researchers,
and one of the most popular directions here is to characterise restrictions on the
type of constraints that lead to problems with lower complexity in comparison with
the general case (see~\cite{Cohen06:handbook,Creignou01:book}). Such classifications
are sometimes called Schaefer-style because the first classification of this type
was obtained by T.J.~Schaefer in his seminal work~\cite{Schaefer78:complexity}. A
local-search related Schaefer-style classification for Boolean {\sc Max CSP} was
obtained in~\cite{Chapdelaine05:local}, in the context of local search complexity
classes such as PLS.

The hardness of searching the $k$-neighborhood (for any optimisation problem) can be
studied very naturally in the framework of parameterized
complexity~\cite{Downey99:book,Flum06:book}, as suggested
in~\cite{Fellows01:frontiers}; such a study for the traveling salesman problem (TSP)
was recently performed in~\cite{Marx08:TSP}. Parameterized complexity studies
hardness in finer detail than classical complexity. Consider, for example, two
standard NP-complete problems {\sc Minimum Vertex Cover} and {\sc Maximum Clique}. Both have
the natural parameter $k$: the size of the required vertex cover/clique. Both
problems can be solved in $n^{O(k)}$ time by complete enumeration. Notice that the
degree of the polynomial grows with $k$, so the algorithm becomes useless for large
graphs, even if $k$ is as small as 10. However, {\sc Minimum Vertex Cover} can be solved
in time $O(2^k\cdot n^2)$~\cite{Downey99:book,Flum06:book}. In other words, for
every fixed cover size there is a polynomial-time (in this case, quadratic)
algorithm solving the problem where the degree of the polynomial is independent of
the parameter. Problems with this property are called fixed-parameter tractable. The
notion of W[1]-hardness in parameterized complexity is analogous to NP-completeness
in classical complexity. Problems that are shown to be W[1]-hard, such as {\sc Maximum
Clique}~\cite{Downey99:book,Flum06:book}, are very unlikely to be fixed-parameter
tractable. A Schaefer-style classification of the basic Boolean CSP with respect to
parameterized complexity (where the parameter is the required Hamming weight of the solution)
was obtained in~\cite{Marx05:CSP}.

In this paper, we give a Schaefer-style complexity classification for
the following problem: given a collection of Boolean constraints, and
a solution to it, the question is whether there is a better (i.e.,
with smaller Hamming weight) solution within a given (Hamming)
distance $k$ from the initial solution. We obtain classification
results both for classical (Theorem~\ref{thm:poly}) and for
parameterized complexity (Theorem~\ref{thm:fpt}).  However, we would
like to point out that it makes much more sense to study this problem
in the parameterized setting. Intuitively, if we are able to decide in
polynomial time whether there is a better solution within distance
$k$, then this seems to be almost as powerful as finding the best
solution (although there are technicalities such as whether there is a
feasible solution at all). Our classification confirms this intuition:
searching the $k$-neighborhood is polynomial-time solvable only in
cases where finding the optimum is also polynomial-time solvable. On
the other hand, there are cases (for example, Horn constraints or
\textsc{1-in-3 Sat}) where the problem of finding the optimum is NP-hard,
but searching the $k$-neighborhood is fixed-parameter tractable.  This
suggests evidence that parameterized complexity is the right setting
for studying local search.

The paper is organized as follows. Section~\ref{sec:preliminaries}
reviews basic notions of parameterized complexity and Boolean CSP.
Section~\ref{sect:fpt} presents the classificiation with respect to
fixed-parameter tractability, while Section~\ref{sect:poly} deals with
polynomial-time solvability. The proofs omitted from
Section~\ref{sect:poly} can be found in the appendix.

\section{Preliminaries}\label{sec:preliminaries}

\textbf{Boolean CSP.} A {\em formula} $\phi$ is a pair $(V,C)$
consisting of a set $V$ of {\em variables} and a set $C$ of {\em
  constraints.} Each constraint $c_i\in C$ is a pair $\langle
\overline s_i, R_i
\rangle$, where $\overline s_i=( x_{i,1},\dots,  x_{i,r_i})$ is an $r_i$-tuple of
variables (the {\em constraint scope}) and $R_i\subseteq \{0,1\}^{r_i}$ is an $r_i$-ary Boolean
relation (the {\em constraint relation}). A function $f: V\to \{0,1\}$ is a {\em satisfying
  assignment} of $\phi$ if  $(f(x_{i,1}),\dots, f(x_{i,r_i}))$ is
in $R_i$ for every $c_i \in C$. Let $\Gamma$ be a set of Boolean
relations. A formula is a {\em $\Gamma$-formula} if every constraint
relation $R_i$ is in $\Gamma$. In this paper, $\Gamma$ is always a
finite set.
The {\em (Hamming) weight} $w(f)$ of
assignment $f$
is the number of variables $x$ with $f(x)=1$. The {\em distance} $\dist(f_1,f_2)$ of
assignments $f_1,f_2$ is the number of variables $x$ with
 $f_1(x)\neq f_2(x)$.


We recall various standard definitions concerning Boolean constraints (cf.~\cite{Creignou01:book}):
\begin{itemize}
\item $R$ is {\em 0-valid} if $(0,\dots,0)\in R$.
\item $R$ is {\em 1-valid} if $(1,\dots,1)\in R$.
\item $R$ is {\em Horn} or {\em weakly negative} if it can be
  expressed as a conjunction of clauses such that each clause
  contains at most one positive literal. It is known that $R$ is Horn
  if and only if it is {\em min-closed:} if
   $(a_1,\dots,a_r)\in R$ and $(b_1,\dots,b_r)\in R$, then
  $(\min(a_1,b_1),\dots, \min(a_r,b_r))\in R$.
\item $R$ is {\em affine} if it can be expressed as a conjunction
  of constraints of the form $x_1 + x_2 +\dots +x_t=b$, where $b\in
  \{0,1\}$ and addition is modulo 2. The number of tuples in an affine
  relation is always an integer power of 2.
\item $R$ is {\em width-2 affine} if it can be expressed as a conjunction
  of constraints of the form $x=y$ and $x\neq y$.
\item $R$ is {\em IHS-B$-$} (or {\em implicative hitting set bounded}) if
it can be represented by a conjunction of clauses of the form $(x)$, $(x \to y)$ and
$(\neg x_1 \vee \ldots \neg x_n)$, $n\ge 1$.
\item The relation $R_{\textsc{$p$-in-$q$}}$ (for $1 \le p \le q$) has arity $q$ and
  $R_{\textsc{$p$-in-$q$}}(x_1,\dots,x_q)$ is true if and only if exactly
  $p$ of the variables $x_1$, $\dots$, $x_q$ have value $1$.
\end{itemize}

The following definition is new in this paper. It plays a crucial role
in characterizing the fixed-parameter tractable cases for local search.
\begin{definition}
Let $R$ be a Boolean relation and $(a_1,\dots, a_r)\in R$. A set
$S\subseteq \{1,\dots,r\}$ is a {\em flip set} of $(a_1,\dots,a_r)$
(with respect to $R$) if $(b_1,\dots,b_r)\in R$ where $b_i=1-a_i$ for
$i\in S$ and $b_i=a_i$ for $i\not\in S$. We say that $R$ is {\em flip
  separable} if whenever some $(a_1,\dots,a_r)\in R$ has two flip sets
$S_1,S_2$ with $S_1\subset S_2$, then $S_2\setminus S_1$ is also a
flip set for $(a_1,\dots,a_r)$.
\end{definition}
It is easy to see that $R_{\textsc{1-in-3}}$ is flip separable: every flip set has
size exactly 2, hence $S_1\subset S_2$ is not possible.
Moreover, $R_{\textsc{$p$-in-$q$}}$ is also flip separable for every
$p\le q$.
Affine constraints are also flip separable: to see this,
it is sufficient to verify the definition only for the constraint $x_1+\dots
+x_r=0$.

The basic problem in CSP is to decide  if a formula has a satisfying assignment:
\begin{center}
\fbox{
\parbox{0.9\linewidth}{
\textsc{CSP($\Gamma$)}\\[5pt]
\begin{tabular}{rl}
{\em Input:} & A $\Gamma$-formula $\phi$.\\[3pt]
{\em Question:} & Does $\phi$ have a satisfying assignment?
\end{tabular}
}}
\end{center}

Schaefer completely characterized the complexity of \textsc{CSP($\Gamma$)} for every
finite set $\Gamma$ of Boolean relations \cite{Schaefer78:complexity}. In
particular, every such problem is either in \PTIME\ or \NP-complete, and there is a
very clear description of the boundary between the two cases.
%

Optimization versions of Boolean CSP were investigated in
\cite{Creignou01:book,Crescenzi02:Hamming}. A straightforward way to obtain an
optimization problem is to relax the requirement that every constraint is satisfied,
and ask for an assignment maximizing the number of satisfied constraints. Another
possibility is to ask for a solution with minimum/maximum weight. In this paper, we
investigate the problem of minimizing the weight. As we do not consider the
approximability the problem, we define here only the decision version:
\begin{center}
\fbox{
\parbox{0.9\linewidth}{
\textsc{Min-Ones($\Gamma$)}\\[5pt]
\begin{tabular}{rl}
{\em Input:} &A $\Gamma$-formula $\phi$ and an integer $W$.\\[3pt]
{\em Question:} &Does $\phi$ have a satisfying assignment $f$ with
$w(f)\le W$?
\end{tabular}
}}
\end{center}

The characterization of the approximability of finding a minimum
weight satisfying assignment for a $\Gamma$-formula can be found in
\cite{Creignou01:book}. Here we state only the classification of
polynomial-time solvable and NP-hard cases:
\begin{theorem}[\cite{Creignou01:book}]
Let $\Gamma$ be a finite set of Boolean relations. \textsc{Min-Ones($\Gamma$)} is
solvable in polynomial time if one the following holds, and \NP-complete otherwise:
\begin{itemize}
\item Every $R\in \Gamma$ is 0-valid.
\item Every $R\in \Gamma$ is Horn.
\item Every $R\in \Gamma$ is width-2 affine.
\end{itemize}
\end{theorem}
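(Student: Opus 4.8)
The plan is to treat the two directions separately: the three tractable cases each admit a direct polynomial-time algorithm, while the NP-completeness is proved by reductions from known hard weighted problems together with a structural case analysis on $\Gamma$. Membership in \NP\ is immediate, since a satisfying assignment of weight at most $W$ is a polynomial-size certificate; so everything reduces to establishing the two sides of the dichotomy.

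For the tractable side I would argue as follows. If every $R\in\Gamma$ is 0-valid, then the all-zero assignment satisfies $\phi$ and has weight $0$, so the instance is a yes-instance whenever $W\ge 0$ and the problem is trivial. If every $R\in\Gamma$ is Horn, I would exploit min-closedness: the set of satisfying assignments is closed under coordinatewise minimum, so if $\phi$ is satisfiable it has a unique minimum-weight model, namely the coordinatewise minimum of all models. This least model is computable in polynomial time by expanding each constraint into Horn clauses and running unit propagation; one then compares its weight with $W$. If every $R\in\Gamma$ is width-2 affine, each constraint has the form $x=y$ or $x\neq y$. I would contract the variables joined by equalities into classes, check that no disequality links a class to itself (otherwise $\phi$ is unsatisfiable), and observe that on each connected component of the resulting disequality graph an assignment is determined by a single binary choice (the two complementary $2$-colorings). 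Selecting in each component the coloring with fewer ones, weighted by class sizes, minimizes the total weight in polynomial time.

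For the hardness side the idea is to reduce from \textsc{Minimum Vertex Cover}, which is exactly \textsc{Min-Ones} for the single relation $x\vee y=\{(0,1),(1,0),(1,1)\}$, and from the minimum-weight codeword problem for the affine cases that are not width-2 (an affine relation of arity at least three lets one encode cosets of a linear code over $\mathrm{GF}(2)$, for which minimizing the weight of a solution is NP-hard). The core is a structural case distinction: if $\Gamma$ is simultaneously not 0-valid, not Horn, and not width-2 affine, then, using Schaefer's framework (equivalently, the lattice of Boolean co-clones), $\Gamma$ can define a relation from which such a hard instance can be built. The delicate point is that \textsc{Min-Ones} reductions must control weight: existentially quantified auxiliary variables contribute to the Hamming weight, so one needs gadgets in which every auxiliary variable is forced to a fixed, known value. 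The ability to create the constants $0$ and $1$ and to implement $x\vee y$ (or a higher-arity affine relation) with such weight-controlled gadgets is exactly what the failure of the three conditions buys: a non-0-valid relation yields a way to force a variable to $1$, while the failure of min-closedness together with the absence of the width-2/affine structure supplies the disjunctive behaviour needed to encode a vertex cover.

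The step I expect to be hardest is precisely this gadget construction with its weight bookkeeping. Unlike the plain CSP dichotomy, where existential quantification is harmless, here each introduced variable shifts the target weight $W$ by an amount that must be computed exactly and uniformly across the instance; verifying that the structural hypotheses always supply constant-forcing, weight-preserving implementations of the hard relations is the technical heart of the argument, and it is where a full case analysis over the Boolean co-clones becomes unavoidable.
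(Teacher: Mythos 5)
First, a point of reference: the paper does not prove this statement at all --- it is quoted, with citation, from~\cite{Creignou01:book} and used purely as background. So the only meaningful comparison is with the proof in that book, and your outline does follow essentially that route: direct polynomial-time algorithms for the three tractable classes (trivial all-zero assignment for 0-valid; least model via unit propagation for Horn; per-component choice of the lighter coloring for width-2 affine), and for hardness a Schaefer-style case analysis reducing from \textsc{Minimum Vertex Cover} (which is exactly \textsc{Min-Ones}$(\{x\vee y\})$) and from the minimum-weight-solution problem for linear systems over $\mathrm{GF}(2)$. Your identification of the crux --- that auxiliary variables introduced by implementations carry Hamming weight, so gadgets must pin them to fixed values --- is exactly the issue that the ``faithful implementation'' machinery of~\cite{Creignou01:book} is built to handle.

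Two concrete points in your sketch are nevertheless wrong or under-argued. First, in the width-2 affine algorithm your satisfiability test (``no disequality links a class to itself'') is insufficient: a triangle of constraints $x\ne y$, $y\ne z$, $z\ne x$ produces no self-loop after contracting equalities, yet it is unsatisfiable over $\{0,1\}$. You must check that the contracted disequality graph is bipartite (no odd cycle); only then does each connected component admit exactly the two complementary colorings from which you select the lighter one. Second, the claim that ``a non-0-valid relation yields a way to force a variable to 1'' fails when $R$ is also not 1-valid: identifying all coordinates gives $R(x,\dots,x)$, which is then the \emph{empty} relation rather than the constant $C_1$. Obtaining constants in that case needs a further step (e.g., substituting along a mixed tuple of $R$ to extract $\{(0,1)\}$ or the disequality relation, and building from there --- compare how the paper itself handles this inside Lemma~\ref{affine-hard} for its own problem). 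Since this constant-forcing step is the entry point to the entire hardness case analysis, which you explicitly defer, the hardness direction of your proposal remains a plan whose hardest step is correctly named but not carried out.
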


A Schaefer-style characterization of the approximability of finding two satisfying
assignments to a formula with a largest distance between them was obtained
in~\cite{Crescenzi02:Hamming}, motivated by the blocks world problem from KR, while
a Schaefer-style classification of the problem of deciding whether a given
satisfying assignment to a given CSP instance is component-wise minimal was
presented in~\cite{Kirousis03:minimal}, motivated by the circumscription formalism
from AI.

The main focus of the paper is the local search version of minimizing weight:
\begin{center}
\fbox{
\parbox{0.9\linewidth}{
\prob{\Gamma}\\[5pt]
\begin{tabular}{rp{0.85\linewidth}}
{\em Input:} & A $\Gamma$-formula $\phi$, a satisfying assignment $f$,
and an integer $k$.\\[3pt]
{\em Question:} & Does $\phi$ have a
  satisfying assignment $f'$ with $w(f')<w(f)$ and $\dist(f,f')\le k$?
\end{tabular}
}}
\end{center}

LS in the above problem stands for both ``local search'' and  ``lighter solution.''

Observe that the satisfying assignments of an $(x \vee y)$-formula correspond to the
vertex covers of the graph where the variables are the vertices and the edges are
the constraints. Thus $\prob{\{x\vee y\}}$ is the problem of reducing the size of a
(given) vertex cover by including and excluding a total of at most $k$ vertices. As
we shall see (Lemma~\ref{lem:orhard}), this problem is W[1]-hard, even for bipartite
graphs. This might be of independent interest.

\textbf{Parameterized complexity.}
In a {\em parmeterized problem,} each instance contains an integer $k$
called the {\em parameter}. A parameterized
 problem  is \textit{fixed-param\-eter tractable (FPT)} if can be
 solved by
 an algorithm with running time $f(k)\cdot n^c$, where $n$ is the
 length of the input, $f$ is an
 arbitrary (computable) function depending only on $k$, and
 $c$ is a constant independent of $k$.


A large fraction of NP-complete problems is known to be FPT. On the
other hand, analogously to NP-completeness in classical complexity, the theory of
W[1]-hardness can be used to give strong evidence that certain
problems are unlikely to be fixed-parameter tractable.
 We omit the somewhat technical
definition of the complexity class W[1], see \cite{Downey99:book,Flum06:book} for
details. Here it will be sufficient to know that there are many
problems, including \textsc{Maximum Clique}, that were proved to be
W[1]-hard.
To prove that a parameterized problem is W[1]-hard, we have to
present a parameterized reduction from a known W[1]-hard
problem. A {\em parameterized reduction} from problem $L_1$ to problem
$L_2$ is a function that transforms
a problem instance $x$ of $L_1$ with parameter $k$ into a problem instance $x'$
of $L_2$ with parameter $k'$ in such a way that
\begin{itemize}
\item $x'$ is a yes-instance of $L_2$ if and only if $x$ is a
  yes-instance of $L_1$,
\item $k'$ can be bounded by a function of $k$, and
\item the transformation can be computed in time $f(k)\cdot |x|^c$ for
  some constant $c$ and function $f(k)$.
\end{itemize}
It is easy to see that if there is a parameterized reduction from
$L_1$ to $L_2$, and $L_2$ is FPT, then it
follows that $L_1$ is FPT as well.


\section{Characterizing fixed-parameter tractability}\label{sect:fpt}

In this section, we completely characterize those finite sets $\Gamma$ of
Boolean relations for which \prob{\Gamma} is
fixed-parameter tractable.

\begin{theorem}\label{thm:fpt}
Let $\Gamma$ be a finite set of Boolean relations. The problem \prob{\Gamma} is in
\textup{FPT} if every relation in $\Gamma$ is Horn or every relation in $\Gamma$ is flip
separable. In all other cases, \prob{\Gamma} is \textup{W[1]}-hard.
\end{theorem}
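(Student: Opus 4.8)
The plan is to treat the two fixed-parameter tractable cases by direct algorithms and to obtain W[1]-hardness in all remaining cases by reduction from the vertex-cover neighbourhood problem $\prob{\{x\vee y\}}$, which is W[1]-hard by Lemma~\ref{lem:orhard}.

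\textbf{Horn case.} First I would show that when every relation is Horn we may restrict the search to assignments $f'\le f$ (pointwise). Indeed, if $f'$ is any lighter satisfying assignment with $\dist(f,f')\le k$, then $g=\min(f,f')$ is satisfying by min-closure, satisfies $g\le f$, has $\dist(f,g)\le\dist(f,f')\le k$, and $w(g)<w(f)$, since the set of positions with $f=1,f'=0$ is nonempty whenever $w(f')<w(f)$. Thus it suffices to decide whether some nonempty set of at most $k$ variables currently set to $1$ can be switched to $0$ while preserving satisfaction. Expressing each Horn relation as a conjunction of clauses with at most one positive literal, only clauses $x_1\wedge\cdots\wedge x_m\to y$ that are \emph{tight} under $f$ (head and all body variables equal to $1$) can break when variables are switched off, and such a clause survives iff its head stays on or some body variable is also switched off. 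This gives a bounded search tree: branch on the first variable to switch off, then repeatedly pick a currently violated tight clause and branch on which of its at most $r_{\max}$ body variables to switch off as well. The depth is at most $k$ and the branching is bounded by the maximum arity, yielding running time $n\cdot r_{\max}^{\,k}\cdot n^{O(1)}$.

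\textbf{Flip-separable case.} The key structural fact I would establish is that flip separability lifts from single constraints to whole formulas: calling $S\subseteq V$ a \emph{valid flip set} (for $f$) if flipping exactly the variables in $S$ yields a satisfying assignment, one shows that if $S_1\subset S_2$ are valid flip sets then so is $S_2\setminus S_1$, by applying the definition of flip separability scope by scope. Consequently every valid flip set partitions into inclusion-minimal valid flip sets, and since the weight change $\delta(S)=|S\cap f^{-1}(1)|-|S\cap f^{-1}(0)|$ is additive over a disjoint partition, any weight-decreasing valid flip set contains an irreducible weight-decreasing one. It then suffices to look for a weight-decreasing valid flip set of size at most $k$, which I would do with a bounded search tree analogous to the Horn case: flip one variable, and whenever a constraint becomes violated branch on which subset of its bounded scope to flip in order to repair it, using the difference-closure property to keep repairs minimal and to guarantee that some branch stays consistent with a target solution. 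The depth is again at most $k$ and the branching is bounded, giving an FPT algorithm.

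\textbf{Hardness.} In every remaining case $\Gamma$ contains a relation $R_h$ that is not Horn (hence not min-closed) and a relation $R_f$ that is not flip separable, possibly with $R_h=R_f$. The plan is to use these as gadgets in a parameterized reduction from $\prob{\{x\vee y\}}$. Non-min-closedness gives tuples $a,b\in R_h$ with $\min(a,b)\notin R_h$, the standard source of disjunctive/OR-like behaviour, which should let us encode the clause $x\vee y$ up to local gadgetry; non-flip-separability gives a tuple with nested flip sets $S_1\subset S_2$ whose difference is not a flip set, which I would use to make the auxiliary variables rigid, so that an improving move in the constructed instance corresponds exactly to one in the vertex-cover instance. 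The reduction must introduce only a number of extra potential flips bounded by a function of $k$, so that the distance bound is preserved on both sides.

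\textbf{Main obstacle.} The principal difficulty is this hardness direction: converting the abstract failures ``not Horn'' and ``not flip separable'' into concrete weight- and distance-controlled gadgets, uniformly over all such $\Gamma$, while keeping the parameter bounded---ordinary Schaefer-style pp-definitions are not automatically distance-preserving, since existentially quantified gadget variables may themselves be flipped. A secondary obstacle is verifying correctness of the local-repair search in the flip-separable case, namely that difference-closure really forces a size-$\le k$ weight-decreasing flip set to be discovered whenever one exists.
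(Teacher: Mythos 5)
Your two algorithmic halves are essentially the paper's own proof. The Horn case is identical (restrict to solutions $f'\le f$ by min-closure, then a depth-$k$ bounded search tree branching on violated constraints). In the flip-separable case you prove the same key fact the paper relies on, namely that the difference property lifts from single constraints to whole formulas, and your correctness argument via inclusion-minimal weight-decreasing valid flip sets is a clean, equivalent repackaging of the paper's argument (the paper instead fixes a lighter solution of minimum distance from $f$ and derives a contradiction when the search tree terminates at a non-improving assignment). Both of these parts are fine.

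The hardness half, however, has a genuine gap, and it is not merely ``details left out.'' First, you take the \textup{W[1]}-hardness of \prob{\{x\vee y\}} as given, but that is itself a nontrivial component of the paper's proof (a reduction from \textsc{Maximum Clique} with a distinguished vertex, using $(t-1)/2$ copies of each vertex variable and careful weight/distance accounting); a blind proof must supply it. Second, and more seriously, your plan assigns the wrong roles to the two bad relations. Failure of min-closure does \emph{not} give OR-like behaviour in general: after identifying variables and handling the constant positions, the witnesses yield a relation $R'$ with $(0,1),(1,0)\in R'$ and $(0,0)\notin R'$, which leaves two cases---either $(1,1)\in R'$ and you indeed get $x\vee y$, or $(1,1)\notin R'$ and all your non-Horn relation $R_h$ provides is disequality. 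Disequality is width-2 affine, so by itself it can never yield hardness (the problem is then even polynomial-time solvable); this is exactly where the non-flip-separable relation $R_f$ must be used. The paper builds the OR gadget in that case out of the tuple of $R_f$ having nested flip sets $S_1\subset S_2$ whose difference is not a flip set, combined with the disequality supplied by $R_h$ to create negated copies of the variables. Conversely, the rigidity of the constant-simulating auxiliary variables comes not from non-flip-separability but from sheer replication: $k+1$ (resp.\ $3k+1$) copies of the variables playing the roles of $0$ and $1$, linked by constraints built from $R'$, so that no solution within distance $k$ can flip all copies. Since your sketch puts OR on $R_h$ and rigidity on $R_f$, it would fail precisely in the unavoidable case where $R_h$ only provides disequality, and the case distinction and mechanism that resolve this are the actual content of the hardness proof.
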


First we handle the fixed-parameter tractable cases
(Lemmas~\ref{horn-fpt} and~\ref{lem:flip-separ-FPT})
\begin{lemma}\label{horn-fpt}
If every $R\in \Gamma$ is Horn, then
\prob{\Gamma} is \textup{FPT}.
\end{lemma}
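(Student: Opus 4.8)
The plan is to prove that when every relation in $\Gamma$ is Horn, we can solve \prob{\Gamma} in FPT time. Recall the input: a Horn formula $\phi$, a satisfying assignment $f$, and parameter $k$; we must decide whether there is a satisfying $f'$ with $w(f')<w(f)$ and $\dist(f,f')\le k$. The crucial structural fact I would exploit is that Horn relations are exactly the min-closed relations. This means the set of satisfying assignments is closed under coordinatewise minimum, so among all satisfying assignments that agree with a prescribed behaviour there is always a unique componentwise-minimal one, obtained by taking the meet of all candidates.

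First I would observe that since $f'$ is lighter than $f$, it must set at least one variable from $1$ to $0$; and since $\dist(f,f')\le k$, the assignments $f$ and $f'$ differ in at most $k$ places. The key idea is to guess which variables are flipped from $0$ to $1$. Let $U=\{x: f(x)=1\}$ and let $f'$ flip some set $A\subseteq U$ down to $0$ and some set $B$ (with $f=0$ on $B$) up to $1$, where $|A|+|B|\le k$ and $|A|>|B|$ (to guarantee $w(f')<w(f)$). The set $B$ of variables going $0\to 1$ has size at most $k$, so I would branch over all $\binom{|V|}{\le k}$ choices of $B$ — but that is not yet FPT since $|V|$ can be large. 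To repair this, I would instead argue that it suffices to guess only $B$ up to its size, and then use min-closedness to make $A$ as large as possible.

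The main step is this: once the set $B$ of $0\to 1$ flips is fixed, consider all satisfying assignments $g$ with $g(x)=1$ for $x\in B$ and $g(x)\le f(x)$ for all other $x$ (i.e., we force $B$ up, allow $U\setminus(\text{anything})$ to drop, and keep the $0$'s off outside $B$). By min-closedness the pointwise minimum $g^\ast$ of all such assignments is itself satisfying, and it drops the maximum possible number of $1$'s of $f$ to $0$ while respecting the forced $B$. Computing $g^\ast$ is polynomial-time: it is essentially unit propagation in the Horn formula obtained by fixing the variables in $B$ to $1$ and asking for the minimal model. Then $f':=g^\ast$ is the best candidate for that particular $B$: it minimizes weight and hence maximizes $w(f)-w(f')$. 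We accept if for some guessed $B$ we get $w(g^\ast)<w(f)$ and $\dist(f,g^\ast)\le k$. Because $|B|\le k$ and the number of dropped variables is at most $k-|B|$, only assignments within distance $k$ are relevant, and the minimal model automatically has the smallest weight.

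The remaining obstacle, which I expect to be the real technical heart, is bounding the number of guesses for $B$ to a function of $k$ alone rather than $n^{O(k)}$. The honest resolution is that we only need to guess the set $B$, whose size is at most $k$, but the candidate variables for $B$ range over all of $V$. To get genuine FPT (not just $n^{O(k)}$) I would use a bounded-search-tree / branching argument: starting from $f$, if the minimal model $g^\ast$ with no forced ups already has smaller weight within distance $k$ we are done; otherwise any lighter $f'$ must raise some currently-$0$ variable, and each such variable is forced by a Horn clause whose other literals constrain the search, giving a branching factor bounded in terms of the (constant) maximum arity of relations in $\Gamma$ and branching depth at most $k$. This yields a search tree of size $f(k)$, and at each leaf we run the polynomial minimal-model computation, giving total running time $f(k)\cdot n^{O(1)}$ as required. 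The interplay between min-closure (guaranteeing a canonical lightest completion) and the bounded branching on forced $0\to 1$ flips is the subtle part that makes the whole argument go through.
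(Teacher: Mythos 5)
The central step of your algorithm --- for each guessed set $B$ of $0\to 1$ flips, compute the minimal model $g^\ast$ (forced to $1$ on $B$, at most $f$ elsewhere) and accept iff $w(g^\ast)<w(f)$ and $\dist(f,g^\ast)\le k$ --- is incorrect, because minimizing weight and staying within distance $k$ are conflicting goals: the minimal model can overshoot the distance bound even when a lighter solution within distance $k$ exists. Concretely, take the Horn constraints $x_i\to x_{i+1}$ on variables $x_1,\dots,x_n$, let $f$ be all ones, and $k=1$. The only admissible guess is $B=\emptyset$ (there are no $0$'s in $f$), and the minimal model is the all-zero assignment, at distance $n>k$, so your algorithm rejects; yet flipping $x_1$ alone yields a satisfying assignment of weight $n-1$ at distance $1$, so this is a yes-instance. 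Your fallback branching argument rests on the claim that when the minimal-model check fails, ``any lighter $f'$ must raise some currently-$0$ variable''; the same example refutes this (there are no $0$ variables at all), and the claim in fact contradicts your own opening observation: by min-closedness one may replace $f'$ by $\min(f,f')$, so a lighter solution within distance $k$ never needs to raise a $0$. For the same reason the entire business of guessing $B$ is a non-issue --- one can take $B=\emptyset$ outright --- so the obstacle you identify as ``the real technical heart'' is not where the difficulty lies.

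The actual difficulty is deciding \emph{which} $1$'s to flip down when the minimal model is too far away, and the paper resolves it with a bounded-depth search tree driven by violated constraints. By min-closure, assume the target solution satisfies $f'\le f$. Branch on the choice of a first variable $x$ with $f(x)=1$, $f'(x)=0$, and flip it. Thereafter, whenever the current assignment violates some constraint, the target $f'$ (which satisfies that constraint) must disagree with the current assignment on one of its at most $r_{\max}$ variables; since the current assignment agrees with $f$ except on already-flipped variables of the difference set, any such disagreement variable currently has value $1$ and must be flipped down in $f'$. So branching over the value-$1$ variables of one violated constraint and flipping one of them maintains the invariant that only variables where $f$ and $f'$ differ are touched. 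That difference set has size at most $k$, so depth $k$ suffices, giving a tree of size at most $r_{\max}^{k+1}$ and total time $r_{\max}^{k+1}\cdot n^{O(1)}$. Note the branching is on \emph{downward} flips dictated by unsatisfied constraints, not on upward flips and not toward the global minimal model; your proposal is missing exactly this mechanism.
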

\begin{proof}
Observe that if there is a solution $f'$, then we can assume that
$f'(x)\le f(x)$ for every variable $x$: by defining $f''(x):=\min
\{f(x),f'(x)\}$, we get that $f''$ is also satisfying (as every
$R\in \Gamma$ is min-closed) and $\dist(f'',f)\le \dist(f',f)$.
Thus we can restrict our search to solutions that can be obtained
from $f$ by changing some 1's to 0's, but every 0 remains
unchanged.

Since $w(f')<w(f)$, there is a variable $x$ with $f(x)=1$ and $f'(x)=0$.  For
every variable $x$ with $f(x)=1$, we try to find a solution $f'$ with
$f'(x)=0$ using a simple bounded-height search tree algorithm. For a
particular $x$, we proceed as follows. We start with initial
assignment $f$. Change the value of $x$ to $0$. If
there is a constraint $\langle (x_1,\dots,x_r),R\rangle$ that is not
satisfied by the new assignment, then we select one of the variables
$x_1$, $\dots$, $x_r$ that has value 1, and change it to 0. Thus at
this point we branch into at most $r$ directions. If the assignment is
still not satisfying, the we branch again on the variables of some
unsatisfied constraint. The branching factor of the resulting search
tree is at most $r_\text{max}$, where $r_\text{max}$ is the maximum
arity of the relations in $\Gamma$. By the observation above, if there
is a solution, then we find a solution on the first $k$ levels of the
search tree. Therefore, we can stop the search on the $k$-th level,
implying that we visit at most $r_\text{max}^{k+1}$ nodes of the search
tree. The work to be done at each node is polynomial in the size $n$
of the input, hence the total running time is $r_\text{max}^{k+1} \cdot
n^{O(1)}$.
\qed\end{proof}
If every $R\in \Gamma$ is not only Horn, but IHS-B$-$ (which is a
subset of Horn), then the algorithm of Lemma~\ref{horn-fpt} actually runs in
polynomial time:
\begin{corollary}\label{lem:cor-ihbs}
If every $R\in \Gamma$ is IHS-B$-$, then
\prob{\Gamma} is in \PTIME.
\end{corollary}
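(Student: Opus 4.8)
The plan is to argue that for IHS-B$-$ constraints the search-tree algorithm of Lemma~\ref{horn-fpt} has branching factor $1$, so that the tree degenerates into a single path (a forced propagation) that can be traversed in polynomial time. First I would replace every constraint $\langle \overline s_i, R_i\rangle$ by the conjunction of clauses of the form $(x)$, $(x\to y)$, and $(\neg x_1\vee\dots\vee\neg x_n)$ guaranteed by the IHS-B$-$ representation of $R_i$; since $\Gamma$ is finite and of bounded arity, this enlarges the instance only by a constant factor and yields an equivalent formula using only these three clause types. Because IHS-B$-$ relations are Horn, the observation from Lemma~\ref{horn-fpt} still applies: it suffices to look for a solution $f'$ obtained from $f$ by changing some $1$'s to $0$'s, leaving every $0$ untouched.

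The heart of the argument is to examine, clause type by clause type, which constraints can become unsatisfied during such a search and how many branching choices each one offers. A negative clause $(\neg x_1\vee\dots\vee\neg x_n)$ can only become \emph{more} satisfied when a variable is switched from $1$ to $0$, so it is never the violated constraint we branch on. A unit clause $(x)$ that is violated has no variable of value $1$ in its scope, so the corresponding branch simply dies: the variable $x$ is forced to remain $1$. Finally, an implication $(x\to y)$ becomes violated precisely when $y$ has just been set to $0$ while $x$ is still $1$; its scope then contains the single value-$1$ variable $x$, forcing us to set $x:=0$. Thus every branching step has exactly one available choice, and the algorithm reduces to deterministically propagating the forced zeros.

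I would then describe this propagation explicitly as a closure operation: starting from $f$ with a chosen $x$ switched to $0$, repeatedly set $u:=0$ whenever a clause $(u\to v)$ has $v=0$ and $u=1$, stopping if a variable occurring in a unit clause is forced to $0$ (in which case no lighter solution switching $x$ off exists). The resulting set of forced zeros is the unique minimal set of flips consistent with $f'(x)=0$, so any admissible $f'$ with $f'(x)=0$ must flip at least these variables. Hence it is enough to test, for each of the at most $n$ starting variables $x$ with $f(x)=1$, whether this closure avoids unit-clause contradictions and flips at most $k$ variables; each closure is computable in polynomial time, giving an overall polynomial-time algorithm.

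The main point to get right is the correctness of collapsing the branching to this single forced path, i.e.\ confirming that the forced closure really is necessary for \emph{every} lighter solution with $f'(x)=0$, not merely sufficient; this is what guarantees that rejecting a candidate $x$ when its closure exceeds $k$ (or hits a unit clause) never discards a genuine solution. Verifying that the closure assignment does satisfy all three clause types — automatic for negative clauses, guaranteed by the closure for implications, and checked explicitly for unit clauses — then completes the proof.
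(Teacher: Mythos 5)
Your proposal is correct and follows essentially the same route as the paper: reduce to clauses of the forms $(x)$, $(x\to y)$, $(\neg x_1\vee\dots\vee\neg x_n)$, note that negative clauses stay satisfied when only $1$'s are flipped to $0$'s, and that a violated $(x)$ or $(x\to y)$ contains at most one variable of value $1$, so the search tree of Lemma~\ref{horn-fpt} has branching factor $1$. Your explicit closure/propagation formulation and its necessity argument are just a more self-contained restatement of the paper's observation that ``there is no branching involved,'' so the two proofs coincide in substance.
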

\begin{proof}
  We can assume that every constraint is either $(x)$, $(x\to y)$, or
  $(\bar x_1 \vee \dots \vee \bar x_r)$. If a constraint $(\bar x_1
  \vee \dots \vee \bar x_r)$ is satisfied in the initial assignment
  $f$, then it remains satisfied after changing some 1's to 0.
  Observe that if a constraint $(x)$ or $(x\to y$) is not satisfied,
  then at most one variable has the value 1. Thus there is no
  branching involved in the algorithm of Lemma~\ref{horn-fpt}, making
  it a polynomial-time algorithm.  \qed\end{proof}

For flip separable relations, we give a very similar branching
algorithm. However, in this case the correctness of the algorithm
requires a nontrivial argument.
\begin{lemma}\label{lem:flip-separ-FPT}
If every $R\in \Gamma$ is flip separable, then
\prob{\Gamma} is FPT.
\end{lemma}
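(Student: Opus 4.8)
The plan is to reuse the bounded search tree idea from Lemma~\ref{horn-fpt}, but now the algorithm must be allowed to flip variables in \emph{both} directions, since for flip separable relations we can no longer assume that an improving solution changes only $1$'s to $0$'s. Concretely, for each variable $x$ with $f(x)=1$ I would start a search that first sets $f(x):=0$ and then repeatedly picks an arbitrary violated constraint and branches on which of its (at most $r_{\max}$) scope variables to add to the accumulated set $P$ of flipped positions. The search is cut off at depth $k$, giving a tree with at most $r_{\max}^{k}$ nodes and a running time of $r_{\max}^{k}\cdot n^{O(1)}$. As the authors warn, the whole difficulty lies in proving that this two-directional search actually finds an improving solution whenever one exists.

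The key structural step, where flip separability is used, is a reduction to a minimal flip set. Suppose $f'$ is an improving solution and let $D=\{x : f(x)\neq f'(x)\}$, so $|D|\le k$ and flipping $f$ on $D$ strictly decreases the weight. Call a set a \emph{flip set of $f$} if flipping $f$ on it yields a satisfying assignment; for every constraint $\langle(x_1,\dots,x_r),R\rangle$ the index set $\{j : x_j\in D\}$ is then a flip set of the tuple $a=(f(x_1),\dots,f(x_r))\in R$ in the sense of the definition. I claim that if some nonempty proper $S\subset D$ is also a flip set of $f$, then so is $D\setminus S$: within each scope the index sets $\{j : x_j\in S\}\subseteq\{j : x_j\in D\}$ are flip sets of $a$, so flip separability yields that $\{j : x_j\in D\setminus S\}$ is a flip set of $a$ as well (the case of equal restrictions being trivial), and hence $D\setminus S$ is a global flip set. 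Since the weight decrease is additive over the partition of $D$ into $S$ and $D\setminus S$, one of the two is again weight-decreasing, so I can repeatedly replace $D$ by a strictly smaller weight-decreasing flip set until I reach an inclusion-minimal one $D^{*}$ (no nonempty proper subset is a flip set) with $|D^{*}|\le k$.

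Finally I would show that the search finds $D^{*}$. Because flipping $D^{*}$ decreases the weight, $D^{*}$ contains a variable $x$ with $f(x)=1$, and the run started at this $x$ begins with $P=\{x\}\subseteq D^{*}$. Minimality of $D^{*}$ guarantees that while $\emptyset\neq P\subsetneq D^{*}$ the assignment obtained by flipping $P$ is \emph{not} satisfying, so the algorithm always has a violated constraint to branch on; and since flipping $D^{*}$ satisfies that constraint while flipping $P$ does not and $P\subseteq D^{*}$, the constraint has a scope variable in $D^{*}\setminus P$. Thus one branch extends $P$ by a variable of $D^{*}$, and after at most $|D^{*}|\le k$ steps we reach $P=D^{*}$ and the improving assignment. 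The main obstacle is exactly this correctness argument: unlike the Horn case, intermediate flips may \emph{increase} the weight, so the depth bound is not automatic and must be recovered from the flip separability reduction to a minimal flip set, together with the observation that minimality forbids the search from stalling at a satisfying intermediate $P$.
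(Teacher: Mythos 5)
Your proposal is correct, and the algorithm itself is exactly the paper's: for each $x$ with $f(x)=1$, branch on violated constraints, flip each variable at most once, and cut the tree off at depth $k$, giving time $r_{\max}^{k+1}\cdot n^{O(1)}$. The correctness argument, however, is organized differently. The paper fixes an improving solution $f'$ at \emph{minimum distance} from $f$, with flip set $X$, shows that the branch which stays inside $X$ reaches \emph{some} satisfying assignment $f_0$ (possibly heavier than $f$) with flip set $X_0\subseteq X$ within $k$ levels, and only then invokes flip separability: if $f_0$ were not improving, then $X\setminus X_0$ would be a weight-decreasing flip set of smaller size, contradicting the minimality of $\dist(f,f')$. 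You push the extremal argument to the front instead: using the same per-constraint splitting step ($S\subseteq D$ both flip sets implies $D\setminus S$ is a flip set, and additivity of the weight change forces one of $S$, $D\setminus S$ to be weight-decreasing), you iteratively refine the witness down to an \emph{inclusion-minimal} weight-decreasing flip set $D^{*}$, and then the search invariant becomes stall-free: for $\emptyset\neq P\subsetneq D^{*}$ the flipped assignment cannot be satisfying, so the good branch always has a violated constraint with a scope variable in $D^{*}\setminus P$ and must terminate exactly at $D^{*}$. Both proofs use flip separability in the identical way; the trade-off is that the paper applies the splitting step once but must handle the case of a heavier satisfying assignment found en route via a final contradiction, whereas your preprocessing of the witness eliminates that case entirely and makes the branch-tracking argument more direct. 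Either way the depth bound $|D^{*}|\le k$ (respectively $|X|\le k$) yields the claimed FPT running time, so your proof is a valid, slightly cleaner repackaging of the same approach.
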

\begin{proof}
Let $(\phi, f,k)$ be an instance of $\prob{\Gamma}$.
  If $w(f')<w(f)$ for some assignment $f'$, there is a variable $x$ with $f(x)=1$ and
  $f'(x)=0$.  For every variable $x$ with $f(x)=1$, we try to find a
  solution $f'$ with $f'(x)=0$ using a simple bounded-height search
  tree algorithm. For each such $x$, we proceed as follows. We
  start with the initial assignment $f$ and set the value of $x$ to $0$.
  Iteratively do the following: (a) if there is a constraint in $\phi$ 
  that is not satisfied by the current assignment and such that the value of
some variable in it has not been flipped yet (on this branch),
then we select one of such
  variables, and flip its
  value; (b) if there is no such constraint, but the current assignment is not satisfying then
  we move to the next branch; (c) if
  every constraint is satisfied, then either we found a required solution or else
 we move to the next branch. If a required solution is not found
on the first $k$ levels of the search tree then the algorithm
reports that there is no required solution.

  Assume that $(\phi,f,k)$ is a yes-instance.
  We claim that if $f'$ is a required solution with minimal distance from $f$,
  then some branch of the algorithm finds it.
  Let $X$ be the set of variables on which $f$ and $f'$ differ, so $|X|\le k$.
  We now show that on the first $k$ levels of the
  search tree, the algorithm finds some satisfying assignment $f_0$
  (possibly heavier than $f$) that
  differs from $f$ only on a subset $X_0\subseteq X$ of variables. To
  see this, assume that at some node of the search tree, the current
  assignment differs from the initial assignment only on a subset of
  $X$; we show that this remains true for at least one child of the
  node. If we branch on the variables $(x_1,\dots,x_r)$ of an
  unsatisfied constraint, then at least one of its variables, say
  $x_i$, has a value different from $f'$ (as $f'$ is a satisfying
  assignment). It follows that $x_i\in X$: otherwise the current value
  of $x_i$ is $f(x_i)$ (since so far we changed variables only in $X$)
  and $f(x_i)=f'(x_i)$ (by the definition of $X$), contradicting the
  fact that current value of $x_i$ is different from $f(x_i)$. Thus if
  we change  variable $x_i$, it remains true that only variables
  from $X$ are changed.  Since $|X|\le k$, this branch of the
  algorithm has to find some satisfying assignment $f_0$.

  If $w(f_0)<w(f)$, then, by the choice of $f'$, we must have $f_0=f'$.
  Otherwise, let $X_0\subseteq X$ be
  the set of variables where $f$ and $f_0$ differ and let $f''$ be the
  assignment that differs from $f$ exactly on the variables
  $X\setminus X_0$. From the fact that every constraint is flip
  separable, it follows that $f''$ is a satisfying assignment. We
  claim that $w(f'')<w(f)$. Indeed, if changing the values of the
  variables in $X$ decreases the weight and changing the values in
  $X_0$ does not decrease the weight, then the set $X\setminus X_0$ has to
  decrease the weight. This contradicts the assumption that $f'$ is a
  solution whose distance from $f$ is minimal: $f''$ is a solution
  with distance $|X\setminus X_0|<|X|$.  Thus it is sufficient to
  investigate only the first $k$ levels of the search tree. As in the proof of
  Lemma~\ref{horn-fpt}, the branching factor of the tree is at
  most $r_\text{max}$, and the algorithm runs in time
   $r_\text{max}^{k+1} \cdot n^{O(1)}$.
  \qed\end{proof}

All the hardness proofs in this section are based on the following
lemma:
\begin{lemma}\label{lem:orhard}
\prob{\{x\vee y\}} is \textup{W[1]}-hard.
\end{lemma}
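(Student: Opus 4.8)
The plan is to exploit the correspondence, noted above, between satisfying assignments of an $\{x\vee y\}$-formula and vertex covers of the associated graph, and to reduce from \textsc{Maximum Clique}, which is W[1]-hard. Given a graph $H$ with $n$ vertices and a parameter $k$, I would build a \emph{bipartite} graph $G$ whose two sides are an ``edge side'' $U$ and a ``vertex side'' $W$: each edge $e\in E(H)$ is represented by $M$ parallel copies in $U$, each vertex $v\in V(H)$ by $P$ copies in $W$, and every copy of $e=\{a,b\}$ is joined to all copies of $a$ and to all copies of $b$. The edges of $G$ are exactly these incidences, so $G$ is bipartite; I take the initial satisfying assignment $f$ to be the cover $C=U$ (all edge-copies), which trivially covers every edge of $G$. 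The new parameter will be a fixed function $k'$ of $k$ only, so that the map is a genuine parameterized reduction; the multiplicities $M,P$ and the budget $k'$ are the quantities to be calibrated.

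Next I would analyze what an improving move from $C=U$ can look like. Since $C=U$, such a move deletes a set $A\subseteq U$ of edge-copies and inserts a set $B\subseteq W$ of vertex-copies; feasibility forces $B\supseteq N(A)$, and since inserting extra copies only increases both weight and distance, we may assume $B=N(A)$. Moreover it is never profitable to delete only some of the copies of a given edge, since the full vertex cost is incurred as soon as one copy is deleted; hence we may assume $A$ consists of all $M$ copies of the edges of some subgraph of $H$ with $t$ vertices and $m$ edges. Such a move decreases the weight by $Mm-tP$ and has distance $Mm+tP$. Choosing $M=k$, $P=\binom{k}{2}-1$ and $k'=2k\binom{k}{2}-k$, a $k$-clique of $H$ (so $t=k$, $m=\binom{k}{2}$) yields a lighter solution within distance $k'$, which gives the easy direction.

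The heart of the proof is the converse: every improving move within distance $k'$ must have $t=k$ and $m=\binom{k}{2}$, i.e.\ must come from a genuine $k$-clique. Combining the requirement $Mm-tP\ge 1$ (weight strictly drops) with $Mm+tP\le k'$ and eliminating $m$ yields an inequality in $t$ which, with the above $M$ and $P$, rules out $t>k$ (the inserted set $N(A)$ becomes too expensive for the budget) and, using $m\le\binom{t}{2}$, also rules out $t<k$ (even a complete subgraph on fewer than $k$ vertices cannot lose weight, because $P/M$ is chosen just below $(k-1)/2$). For $t=k$ the strict weight decrease then forces $m>\binom{k}{2}-1$, hence $m=\binom{k}{2}$, so the $k$ chosen vertices are pairwise adjacent in $H$. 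The main obstacle is exactly this calibration: the multiplicities must squeeze $t$ to the single value $k$ and $m$ to its maximum simultaneously, and since there is less than one edge's worth of slack, the argument is essentially an integrality computation — which is why $M$ and $P$ are taken as above rather than as the naive $M=1$. Finally I would check that the construction is polynomial in $|H|$ and $k$ and that $k'$ depends only on $k$, so all conditions of a parameterized reduction hold; as $G$ is bipartite, the resulting hardness holds already for bipartite graphs.
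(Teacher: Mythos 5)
Your proof is correct, and it shares the paper's high-level strategy: view satisfying assignments of an $\{x\vee y\}$-formula as vertex covers, reduce from a clique problem, blow up the graph into copies, take the all-edge-variables cover as the initial solution, and calibrate multiplicities so that the only weight-saving move within the distance budget is trading the edges of a $k$-clique for its vertices. Where you genuinely diverge is in how the \emph{strict} weight decrease is engineered, which is the delicate point (a clique trade is naturally weight-neutral under naive multiplicities). The paper keeps a single variable per edge, gives each vertex $d=(t-1)/2$ copies, and reduces from the W[1]-hard variant of clique with a \emph{distinguished vertex} $x$, which receives only $d-1$ copies; that one missing copy makes a clique through $x$ profitable by exactly $1$, and it is also why the paper needs $t$ odd. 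You instead multiply both sides ($M=k$ copies per edge, $P=\binom{k}{2}-1$ copies per vertex), so that a $k$-clique is profitable by exactly $k$. This buys you a reduction from plain \textsc{Maximum Clique} with no distinguished vertex and no parity assumption, and a very transparent integrality analysis ($P/M$ just below $(k-1)/2$); the price is a larger instance and budget $k'=2k\binom{k}{2}-k$ instead of the paper's $t(t-1)-1$. Both constructions give bipartite instances, and your $k'$ depends only on $k$, so the reduction is a valid parameterized one.

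One step needs a small repair: the claim that ``we may assume $A$ consists of all $M$ copies of the edges of some subgraph'' is not literally a without-loss-of-generality step in the converse direction, because completing a partial deletion to full copy-classes increases the distance and could exceed $k'$. Fortunately your inequalities never need it. If the move deletes $m_e\ge 0$ copies of each edge $e$, let $F$ be the set of touched edges and $t=|V(F)|$; then improvement gives $\sum_e m_e\ge tP+1$ and the distance bound gives $\sum_e m_e+tP\le k'$, hence $2tP+1\le k'$, which rules out $t>k$ for $k\ge 3$ exactly as in your computation. For the lower end, $\sum_e m_e\le M|F|\le M\binom{t}{2}$ reproduces your $t<k$ contradiction, and for $t=k$ it forces $|F|=\binom{k}{2}$, i.e.\ a $k$-clique. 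So the argument goes through verbatim with $\sum_e m_e$ and $|F|$ in place of $Mm$ and $m$.
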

\begin{proof}
The proof is by reduction from a variant of \textsc{Maximum Clique}: given
a graph $G(V,E)$ with a distinguished vertex $x$ and an integer $t$, we
have to decide whether $G$ has a clique of size $t$ that contains
$x$. It is easy to see that this problem is W[1]-hard.
Furthermore, it can be assumed that $t$
is odd. Let $n$ be the number of vertices of $G$ and let $m$ be the
number of edges. We construct a formula $\phi$ on $m+n(t-1)/2-1$
variables and a satisfying assignment $f$ such that $G$ has a clique of size $t$
containing $x$ if
and only if $\phi$ has a satisfying assignment $f'$ with $w(f')<w(f)$
and distance at most $k:=t(t-1)-1$ from $f$.

Let $d=(t-1)/2$ (note that $t$ is odd). The formula $\phi$ has $d$ variables $v_1$, $\dots$, $v_d$
for each vertex $v\neq x$ of $G$ and a variable $u_e$ for each edge $e$
of $G$. The distinguished vertex $x$ has only $d-1$ variables $x_1$,
$\dots$, $x_{d-1}$.  If a vertex $v$ is the endpoint of an edge $e$,
then for every $1 \le i \le d$ (or $1 \le i \le d-1$, if $v=x$), we add the constraint $u_e\vee
v_{i}$. Thus each variable $u_e$
is in $2d-1$ or $2d$ constraints
(depending on whether $x$ is the endpoint of $e$ or not).
Set $f(u_e)=1$ for every $e\in E$ and $f(v_{i})=0$ for every $v\in V$,
$1\le i \le d$. Clearly, $f$ is a satisfying assignment.

Assume that $G$ has a clique $K$ of size $t$ that includes
$x$. Set $f'(v_i)=1$ for every $v\in K$ ($1 \le i \le d$) and set
$f'(u_e)=0$ for every edge $e$ in $K$; let $f'$ be the same
as $f$ on every other variable. Observe that $f'$ is also a satisfying
assignment: if a variable $u_e$ was changed to 0 and there is a
constraint $u_e \vee v_i$, then $v\in K$ and hence
$f'(v_i)=1$. We have $w(f')<w(f)$:
$dt-1$ variables were changed to $1$ (note that $x\in K$) and
$t(t-1)/2=dk$ variables were changed to $0$. Moreover, the distance of
$f$ and $f'$ is exactly $dt-1+t(t-1)/2=t(t-1)-1=k$.

Assume now that $f'$ satisfies the requirements.  Let $K$ be the set
of those vertices $v$ in $G$ for which $f'(v_i)=1$ for every $i$. We
claim that $K$ is a clique of size $t$ in $G$.  Observe that there are
at least $d|K|-1$ variables $v_i$ with $f'(v_i)>f(v_i)$ and
$f'(u_e)<f(u_e)$ is possible only if both endpoints of $e$ are in $K$,
i.e., $e$ is in the set $E(K)$ of edges in $K$. Thus
$w(f')<w(f)$ implies $d|K|-1<|E(K)|\le |K|(|K|-1)/2$, which is only
possible if $|K|\ge t$.
 If
$|K|>t$, then $f'(v_i)>f(v_i)$ for at least $(t+1)d-1$ variables,
hence there must be at least that many variables $u_e$ with
$f'(u_e)<f(u_e)$. Thus the distance of $f$ and $f'$ is at least
$2(t+1)d-2>t(t-1)-1$.
Therefore, we can assume $|K|=t$.  Now $dt-1 < |E(K)| \le
|K|(|K|-1)/2=t(t-1)/2$ is only possible if $|E(K)|=t(t-1)/2$ (i.e., $K$
is a clique) and it follows that there are exactly $dt-1$ variables $v_i$ with
$f'(v_i)>f(v_i)$ (i.e., $x\in K$).
\qed\end{proof}
Now we are ready to present the main hardness proof of the section:
\begin{lemma}\label{lem:W1-main}
If $\Gamma$ contains a relation $R_1$ that is not Horn and a
relation $R_2$ that is not flip separable, then \prob{\Gamma} is \textup{W[1]}-hard.
\end{lemma}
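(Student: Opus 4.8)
The plan is to give a parameterized reduction from $\prob{\{x\vee y\}}$, which is W[1]-hard by Lemma~\ref{lem:orhard}, to $\prob{\Gamma}$. Starting from a source instance $(\phi,f,k)$ built from binary constraints $x\vee y$, I would replace each such constraint by a small gadget composed of $R_1$- and $R_2$-constraints together with a bounded number of auxiliary variables, and add a few global ``constant'' gadgets. The whole point is to \emph{implement} the relation $x\vee y$ from $\{R_1,R_2\}$ in a way that is faithful not only logically but also with respect to weight and distance, so that a lighter satisfying assignment within distance $k$ of the source corresponds to a lighter one within distance $k'=O(k)$ of the target, and conversely.

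First I would extract usable binary relations from the hypotheses by identifying coordinates and substituting constants. Since $R_1$ is not Horn, it is not min-closed, so there are $a,b\in R_1$ with $\min(a,b)\notin R_1$. Partitioning the coordinates according to the pattern $(a_i,b_i)\in\{(0,0),(0,1),(1,0),(1,1)\}$ and identifying the variables within each class turns $a$, $b$, $\min(a,b)$ into $(p,q,s,t)=(0,1,0,1)$, $(0,1,1,0)$, $(0,1,0,0)$ respectively, the last being excluded. Substituting the constants $p=0$, $q=1$ then gives a binary relation $R''(s,t)$ that contains $(0,1)$ and $(1,0)$ but not $(0,0)$. Hence $R''$ is either exactly $x\vee y$ (Case~A) or exactly the disequality $x\neq y$ (Case~B). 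In Case~A we already have an OR constraint and can proceed to the reduction directly.

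In Case~B I would bring in $R_2$. Because $R_2$ is not flip separable, there is a tuple $a\in R_2$ with flip sets $S_1\subset S_2$ such that $S_2\setminus S_1$ is not a flip set; after translating $R_2$ by $a$, this says that the characteristic vectors of $\emptyset$, $S_1$, $S_2$ lie in the translated relation while that of $S_2\setminus S_1$ does not. Identifying the coordinates inside $S_1$, inside $S_2\setminus S_1$, and outside $S_2$, and substituting the constant $0$ on the outside block, produces a binary relation that, up to swapping its two coordinates, is the implication $\{(0,0),(0,1),(1,1)\}$. Combining this implication with the disequality from Case~B via the identity $x\vee y\equiv \exists z\,[(x\neq z)\wedge(z\to y)]$ (take $z=\neg x$) then yields exactly $x\vee y$. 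Thus in both cases I can realize an OR constraint using $O(1)$ constraints and auxiliary variables per original constraint.

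The step I expect to be the \textbf{main obstacle} is making this simulation weight- and distance-faithful, and in particular realizing the required constants. Two difficulties must be handled simultaneously. First, the constants $0$ and $1$ used above have to be enforced without explicit constant relations; the natural device is to pin a variable by a bundle of equalized copies that is too large to flip within the budget (equality being available as $\exists\,[(x\to y)\wedge(y\to x)]$ from the implication, or from two disequalities), while exploiting that we only search for \emph{lighter} solutions, so a constant-$0$ variable has no incentive to rise. Second, the forced auxiliary variables perturb the weight: the disequality forces $z=\neg s$, so turning an $s$ from $1$ to $0$ is exactly cancelled by $z$ going from $0$ to $1$, which would destroy the correspondence between ``lighter in the source'' and ``lighter in the target.'' To prevent such cancellations I would control the weights by duplicating and balancing the gadget variables so that every intended local move still strictly decreases the total weight, while keeping the number of changed variables, and hence $k'$, linear in $k$. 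Verifying that, with these provisions, a lighter solution within distance $k$ exists in the source if and only if one exists within distance $k'$ in the target, in both directions, is the technical heart of the argument.
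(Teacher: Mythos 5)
Your high-level plan coincides with the paper's: reduce from \prob{\{x\vee y\}}, extract from $R_1$ (grouping coordinates by the patterns of $a$, $b$, $\min(a,b)$ and identifying within groups) a relation that, once constants $0,1$ are plugged in, is either OR or disequality, and in the disequality case use the witness $S_1\subset S_2$ of non-flip-separability of $R_2$ to build an OR gadget, with duplicated and negated variable copies to keep weights aligned. The genuine gap is in the one step you yourself flagged as the main obstacle: realizing the constants. Your device --- pin a variable by a bundle of copies tied together by equality, with equality obtained ``from the implication, or from two disequalities'' --- is circular: implication and disequality are themselves obtained only by substituting the constants $0$ and $1$ into identifications of $R_1$ and $R_2$, so neither is available before the constants are; and in Case A nothing besides the 4-ary relation $R'_1$ (with its four known membership facts) is available at all. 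Moreover, your weight-monotonicity remark (``a constant-$0$ variable has no incentive to rise'') addresses only the harmless direction: the real danger is that constant-$1$ variables have every incentive to \emph{fall}, since each such flip by itself makes the assignment lighter. Unless they are genuinely forced to stay at $1$, the constructed instance will typically admit spurious lighter solutions within the distance budget (flip a few unanchored constant-$1$ variables to $0$) even when the source instance is a no-instance, and the reduction is simply incorrect.

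The paper's way out of this circle is the crux of its proof and is absent from your sketch: introduce $k+1$ independent candidate pairs $q^j_0,q^j_1$ and constrain them \emph{self-referentially by $R'_1$ itself}, via $c^1_{a,b,c}=R'_1(q^a_1,q^b_0,q^b_0,q^c_1)$. The intended assignment satisfies these constraints because $(1,0,0,1)\in R'_1$; and since the distance budget is $k$, some $q^b_0$ must stay $0$ and some $q^c_1$ must stay $1$, whereupon $(0,0,0,1)\notin R'_1$ forces \emph{every} $q^a_1$ to equal $1$. Each simulated OR constraint is then replicated once for every $j$, so that whichever $q^b_0$ remains $0$ activates a correct copy of the gadget. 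A related repair is needed in your Case B: you cannot literally ``translate $R_2$ by $a$'' and then identify the blocks $S_1$ and $S_2\setminus S_1$, since $a$ need not be constant on those blocks; one needs the value-dependent assignment using disequality-tied pairs (the paper's $x_1/x_3$, $y_1/y_3$) together with constants on the coordinates outside $S_2$ --- which again presupposes the anchoring mechanism above. Without this, or some equivalent non-circular way of forcing the constants, your reduction does not go through.
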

\begin{proof}
  The proof is by reduction from \prob{\{x\vee y\}}. Let
  $(phi_1,f_1,k)$ be an instance of \prob{\{x\vee y\}}, i.e., every
  constraint relation in formula $\phi_1=(V,C)$ is $(x \vee y)$.
  Since $R_1$ is not min-closed, we can assume (by
  permuting the variables) that for some $r_1,r_2\ge 1$, $r_3,r_4\ge
  0$, if we define
\[
R'_1(x,y,w_0,w_1)=R_1(\overbrace{x,\dots,x}^{r_1},\overbrace{y,\dots,y}^{r_2},\overbrace{w_0,\dots,w_0}^{r_3},\overbrace{w_1,\dots,w_1}^{r_4}),
\]
then
$(0,1,0,1),(1,0,0,1)\in R'_1$, but $(0,0,0,1)\not\in R'_1$.
Since $R'_1$ is obtained from $R_1$ by identifying variables, we can
use the relation $R'_1$ when specifying instances of \prob{\Gamma}.
We consider two cases:

\textbf{Case 1:} $(1,1,0,1)\in R'_1$. In this case $R'_1(x,y,0,1)=x
\vee y$, hence it is easy to simulate \prob{\{x \vee y\}}.  The only
difficulty is how to simulate the constants 0 and 1. We do this as
follows. Let us construct a formula $\phi_2$ that has every variable
of $V$ and new variables $q^j_0$, $q^j_1$ for every $1 \le j \le k+1$
(these new variables will play the role of the constants). We define
assignment $f_2$ of $\phi_2$ by setting $f_2(x)=f_1(x)$ for $x\in V$
and $f_2(q^j_0)=0$ and $f_2(q^j_1)=1$ for $1 \le j \le k+1$.  For $1
\le a,b,c \le k+1$, we add constraint
$c^1_{a,b,c}=R'_1(q^{a}_1,q^b_0,q^b_0,q^c_1)$, it is clearly satisfied
by assignment $f_2$.  To simulate a constraint $x\vee y$, we add
$c^2_{x,y,j}=R'_1(x,y,q^j_0,q^1_1)$ for every $1 \le j \le k+1$.

It is easy to see that if there is a solution $f'_1$ for the original
instance $(\phi_1,f_1,k)$, then by setting $f'_2(x)=f'_1(x)$ for every
$x\in V$ and $f'_2(q^j_0)=0$, $f'_2(q^j_1)=1$ for every $1 \le j \le
k+1$ gives a solution $f'_2$ for the constructed instance
$(\phi_2,f_2,k)$. We claim the converse is also true: if $f'_2$ is a
solution for the instance $(\phi_2,f_2,k)$, then the restriction of $f'_2$ to
$V$ gives a solution for $(\phi_1,f_1,k)$. Since the distance of $f_2$ and
$f'_2$ is at most $k$, there are $1 \le b,c \le k+1$ with
$f'_2(q^b_0)=0$ and $f'_2(q^c_1)=1$.  Because of the constraint
$c^1_{a,b,c}$, we have that $f'_2(q^a_1)=1$ for every $1\le a \le
k+1$. It follows that $f'_2$ restricted $V$ is a satisfying assignment
of $\phi_1$: for every constraint $x\vee y \in C$, the constraint
$c^2_{x,y,b}$ prevents the possibility $f'_2(x)=f'_2(y)=0$. We have
seen that $f'_2(q^j_0)\ge f_2(q^j_0)$ and $f'_2(q^j_1)\ge f_2(q^j_1)$
for every $1 \le j \le k+1$. Now $w(f'_2)<w(f_2)$ implies that the
weight of $f'_2$ on $V$ has to be less than the weight of $f_2$ on
$V$. Thus $w(f'_1)<w(f_1)$.

\textbf{Case 2:} $(1,1,0,1)\not \in R'_1$, which means that
$R'_1(x,y,0,1)$ is $x\neq y$. In this case we have to rely on the fact
that $R_2$ is not flip separable to simulate the constraint $x\vee y$.
We construct formula $\phi_2$ and its satisfying assignment $f_2$ as
follows. Each variable $x$ is replaced by 3 variables $x_1$, $x_2$,
$x_3$.  We set $f_2(x_1)=f_2(x_2)=f_1(x)$ and $f_2(x_3)=1-f_1(x)$.
Furthermore, for $1\le j \le 3k+1$, we add the variables $q^j_0$ and
$q^j_1$ and set $f_2(q^j_0)=0$ and $f_2(q^j_1)=1$.

For every $1 \le a,b,c \le 3k+1$, we add the constraint
$c^1_{a,b,c}=R'_1(q^{a}_1,q^b_0,q^b_0,q^c_1) $, as in the
previous case. For every $x\in V$, $1 \le j \le 3k+1$, and $\ell=1,2$,
we add $c^2_{x,\ell,j}=R'_1(x_\ell,x_3,q^j_0,q^1_1)$, as we shall see,
the role of these constraints is to ensure $f'_2(x_1)=f'_2(x_2)\neq
f'_2(x_3)$.

Since $R_2$ is not flip separable, there is a tuple
$(s_1,\dots,s_r)\in R_2$ that has flip sets $S_1\subset S_2$, but
$S_2\setminus S_1$ is not a flip set. For every constraint $x\vee y$
of $\phi_1$, we add $3k+1$ constraints to $\phi_2$ as follows. First,
for $1 \le i \le r$ and $1 \le j \le 3k+1$, we define variable $v^j_i$
as
\[
v^j_i = \begin{cases}
x_1 & \text{if $i\in S_1$ and $s_i=0$},\\
x_3 & \text{if $i\in S_1$ and $s_i=1$},\\
y_1 & \text{if $i\in S_2\setminus S_1$ and $s_i=1$},\\
y_3 & \text{if $i\in S_2\setminus S_1$ and $s_i=0$},\\
q^1_1 & \text{if $i\not\in S_2$ and $s_i=1$},\\
q^j_0 & \text{if $i\not\in S_2$ and $s_i=0$}.
\end{cases}
\]
For every $1 \le j \le 3k+1$, we add the constraint
$c^3_{x,y,j}=R_2(v^j_1,\dots,v^j_{r}).$
For example, assume that $(0,1,0,1)\in R_2$ and this tuple has
flip sets $S_1=\{1,2\}$ and $S_2=\{1,2,3,4\}$, but $S_2\setminus S_1=\{3,4\}$
is not a flip set. This means that $(0,1,0,1),(1,0,1,0),(1,0,0,1)\in
R_2$ and $(0,1,1,0)\not\in R_2$. In this case,
constraint $c^3_{x,y,j}$ is $R_2(x_1,x_3,y_3,y_1)$. Assuming
$f(x_1)\neq f(x_3)$ and $f(y_1)\neq f(y_3)$, any combination of values
on $x_1$ and $y_1$ satisfies the constraint, except if
$f(x_1)=f(y_1)=0$. Thus the constraint effectively acts as a
constraint $x_1\vee y_1$.

Finally, we set the maximum allowed distance to $k':=3k$. This
completes the description of the constructed instance $(\phi_2,f_2,k')$.

Assume first that $f'_1$ is a solution for the instance $(\phi_1,f_1,k)$. Define
$f'_2(x_1)=f'_2(x_2)=f'_1(x)$ and $f'_2(x_3)=1-f'_1(x)$ for every
$x\in V$, and define $f'_2(q^j_0)=0$, $f'_2(q^j_1)=1$ for every $1\le
j\le 3k+1$. The fact $w(f'_1)<w(f_1)$ implies $w(f'_2)<w(f_2)$.
Furthermore, the distance of $f_2$ and $f'_2$ is exactly three times
the distance of $f_1$ and $f'_1$, i.e., at most $3k$.  We claim that
$f'_2$ satisfies the constraints of $\phi_2$. This is easy to see for
$c^1_{a,b,c}$ and $c^2_{x,\ell,j}$. For 
$c^3_{x,y,j}$, this can be seen as follows:
\begin{itemize}
\item
If $f'_2(x)=0$, $f'_2(y)=1$, then because $(s_1,\dots, s_r)\in
  R_2$.
\item If $f'_2(x)=1$, $f'_2(y)=0$, then because $S_2$ is a flip set.
\item If $f'_2(x)=1$, $f'_2(y)=1$, then because $S_1$ is a flip set.
\end{itemize}

For the other direction, assume that $f'_2$ is a solution for
instance $(\phi_2,f_2,k')$. Define $f'_1(x)=f'_2(x_1)$ for every $x\in V$; we claim that
$f'_1$ is a solution for instance $(\phi_1,f_1,k)$.  Since the distance of $f_2$ and
$f'_2$ is at most $3k$, there are $1 \le b,c \le 3k+1$ with
$f'_2(q^b_0)=0$ and $f'_2(q^c_1)=1$. Because of the constraint
$c^1_{a,b,c}$, we have that $f'_2(q^a_1)=1$ for every $1\le a \le
3k+1$. The constraints $c^2_{x,1,b}$ and $c^2_{x,2,b}$ ensure that
$f'_2(x_1)=f'_2(x_2)=1-f'_2(x_3)$ (since $(0,0,0,1)\not\in R'_1$ and
$(1,1,0,1)\not\in R'_1$). It follows that the distance of
$f_1$ and $f'_1$ is at most $k$: $f_1(x)\neq f'_1(x)$ implies
$f_2(x_\ell)\neq f'_2(x_\ell)$ for $\ell=1,2,3$, hence this can hold
for at most $k$ different $x\in V$. Moreover, $w(f'_1)<w(f_1)$: this
follows from the facts $w(f'_2)<w(f_2)$ and  $f'_2(q^j_0)\ge f_2(q^k_0)$,
$f'_2(q^j_1)\ge f_2(q^k_1)$  ($1\le j \le 3k+1$).

We claim that every constraint $x\vee y$ of $\phi_1$ is
satisfied. Assume that $f'_1(x)=f'_1(y)=f'_2(x_1)=f'_2(y_1)=0$. Now
 $c^3_{x,y,b}$ is not satisfied: this follows from the
fact that $S_2\setminus S_1$ is not a flip set for $(s_1,\dots, s_r)$
(with respect to $R_2$).
\qed\end{proof}


\section{Characterizing polynomial-time solvability}\label{sect:poly}

In this section, we completely characterize those finite sets $\Gamma$ of Boolean
relations for which \prob{\Gamma} is polynomial-time solvable.

\begin{theorem}\label{thm:poly}
Let $\Gamma$ be a finite set of Boolean relations. The problem \prob{\Gamma} is in
$\PTIME$ if every relation in $\Gamma$ is IHS-B$-$ or every relation in $\Gamma$ is
width-2 affine. In all other cases, \prob{\Gamma} is $\NP$-hard.
\end{theorem}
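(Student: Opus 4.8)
The plan is to prove the two polynomial-time cases directly and then to establish \NP-hardness for every $\Gamma$ lying outside both tractable classes, that is, whenever $\Gamma$ contains a relation that is not IHS-B$-$ together with a relation that is not width-2 affine (the exact negation of the tractability hypothesis). The IHS-B$-$ case of tractability is already Corollary~\ref{lem:cor-ihbs}, so only width-2 affine tractability is new. Here every constraint has the form $x=y$ or $x\neq y$, and I would build the graph on the variables whose edges record these equalities and disequalities. Each connected component is rigid: fixing the value of one of its variables determines, through the edges, the values of all the others, so the component has exactly two satisfying patterns, each the complement of the other, and every satisfying assignment either agrees with $f$ on a component or is its complement there. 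Flipping a component $C$ adds $|C|$ to the distance and changes the weight by $z-o$, where $o$ and $z$ count the variables of $C$ that $f$ sets to $1$ and to $0$. As both quantities are additive over the flipped components, a strictly lighter solution within distance $k$ exists if and only if some single component has $o>z$ and $|C|\le k$: any improving family of flips must contain a component that is already improving on its own, and that component is no larger than the family. This is checkable in polynomial time.

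For hardness I would first upgrade Lemma~\ref{lem:orhard}: its source problem, deciding whether $G$ has a $t$-clique through a fixed vertex, is \NP-complete (attach a universal vertex and reduce ordinary \textsc{Clique}), and the construction there is a polynomial-time many-one reduction that merely sets $k=t(t-1)-1$; hence \prob{\{x\vee y\}} is in fact \NP-hard, not merely W[1]-hard. With this base case, the sets $\Gamma$ that are neither all-Horn nor all-flip-separable are handled at once: the reduction of Lemma~\ref{lem:W1-main} from \prob{\{x\vee y\}} is polynomial (it builds a polynomial-size formula and sets $k'=3k$), so \prob{\Gamma} is \NP-hard, and since a non-Horn relation is never IHS-B$-$ and a non-flip-separable relation is never width-2 affine, these $\Gamma$ really do lie outside both islands.

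The remaining instances of the hardness claim are those in the fixed-parameter tractable region---all Horn, or all flip-separable---that still contain a non-IHS-B$-$ relation and a non-width-2-affine relation, and this is where the main work lies. If every relation is Horn, some $R_1$ is Horn but not IHS-B$-$ (and then not width-2 affine either, since a min-closed width-2-affine relation uses only equalities and is IHS-B$-$); I would use $R_1$ with simulated constants to primitive-positively realise the canonical Horn-but-not-IHS relation $x\wedge y\to z$, and then reduce a clique-type \NP-complete problem in which lowering the weight within the distance budget forces a globally consistent choice among the antecedents of these AND-gates. If every relation is flip-separable, the witnessing relations are typified by the affine relations of arity at least $3$ and by $R_{\textsc{1-in-3}}$, each of which is simultaneously non-IHS-B$-$ and non-width-2-affine; for affine relations the hardness is coding-theoretic, reducing from the \NP-hard task of finding a low-weight vector in the solution code that strictly lightens $f$, and for the other flip-separable relations one builds a gadget from the two witnessing relations in the spirit of Lemma~\ref{lem:W1-main}.

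The main obstacle is precisely these gadget and reduction constructions in the fixed-parameter tractable region, which run parallel to Lemma~\ref{lem:W1-main} but for the finer IHS-B$-$/width-2-affine boundary. Two bookkeeping points are delicate throughout: the constants $0$ and $1$ must be simulated by sufficiently many redundant copies $q^j_0,q^j_1$ to survive the distance bound (exactly as in Lemma~\ref{lem:W1-main}), and the strict decrease in weight must be balanced against the permitted distance so that solutions of the constructed instance correspond exactly to solutions of the source. The affine subcase is the most delicate, because substituting constants into an arity-$3$ parity collapses it to a tractable width-$2$ constraint, so its hardness cannot come from any local gadget and must instead be drawn from the global linear structure of the solution space.
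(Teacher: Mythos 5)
Your tractability arguments and your first hardness regime are correct and coincide with the paper's own proof: the component-flipping algorithm for width-2 affine constraints is exactly the paper's (a lighter solution within distance $k$ exists iff some component of size at most $k$ has more 1's than 0's under $f$), and the paper itself makes your observation that the reduction in Lemma~\ref{lem:W1-main} is in fact a polynomial-time reduction from an \NP-hard problem, which disposes of every $\Gamma$ containing both a non-Horn relation and a non-flip-separable relation.

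The genuine gap is everything after that, namely \NP-hardness of \prob{\{R\}} when $R$ is Horn but not IHS-B$-$, and when $R$ is flip separable but not width-2 affine; these are exactly where the paper's real work lies, and your proposal offers intentions rather than proofs, in places pointing at steps that would not go through as stated. Concretely: (i) you assume one can primitive-positively realise $x\wedge y\to z$ from an arbitrary Horn, non-IHS-B$-$ relation; the paper does not achieve this and instead extracts only a \emph{partially specified} ternary relation $R'$ (containing $(1,1,1),(0,1,0),(1,0,0),(0,0,0)$ and omitting $(1,1,0)$, with the remaining tuples unknown), designs a \textsc{Minimum Dominating Set} reduction that is valid for every such $R'$, and then needs a separate, nontrivial argument with paired variables $x_0^i,x_1^i$ to eliminate the substituted constants --- your $q^j_0,q^j_1$ device from Lemma~\ref{lem:W1-main} does not transfer, since it exploited the specific relation $R'_1$ there. (ii) Your claim that the flip-separable, non-width-2-affine relations are ``typified'' by affine relations of arity $\ge 3$ and by $R_{\textsc{1-in-3}}$ is itself a structural theorem: the paper spends its entire appendix (Lemmas~\ref{add-rel}--\ref{affine-flip-hard}) splitting into the cases ``affine but not width-2'' and ``flip separable but not affine'' and realising the canonical relations $R_J$ and $T_J$ from an arbitrary such $R$, with delicate simulations of $\ne$, $C_0$, $C_1$ that respect the distance bound. (iii) The coding-theoretic route you sketch for the affine case is unexamined: you would need to realise an arbitrary linear code as the solution space of a formula over a single fixed bounded-arity relation (the auxiliary chaining variables distort both weight and distance accounting), and you would need \NP-hardness of a minimum-weight-codeword-type problem; the paper instead reduces from \textsc{1-in-3 Sat} with every variable occurring in exactly three constraints, via a weight-amplification gadget (Lemma~\ref{lem:affine3}), and then lifts to general affine relations through Lemmas~\ref{base-aff-rel} and~\ref{affine-hard}. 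Since you yourself flag these constructions as ``the main obstacle,'' what you have is the theorem's easier half together with a plan, not a proof, of its harder half.
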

\begin{proof}
  If every relation in $\Gamma$ is IHS-B$-$, then
  Corollary~\ref{lem:cor-ihbs} gives a polynomial-time algorithm.
 If every relation in $\Gamma$ is width-2 affine then the following
 simple algorithm solves \prob{\Gamma}: for a given instance
   $(\phi,f,k)$, compute the graph whose vertices are the variables in $\phi$
   and two vertices are connected if there is a constraint in $\phi$
   imposed on them. If there is a connected component of this graph
   which has at most $k$ vertices and such that $f$ assigns more 1's
   in this component than it does 0's, then flipping the values in
   this component gives a required lighter solution. If such a
   component does not exists, then there is no lighter solution within
   distance $k$ from  $f$.

   By Lemma~\ref{lem:W1-main}, if $\Gamma$ contains a relation that is
   not Horn and a relation that is not flip separable then
   \prob{\Gamma} is $\NP$-hard.  (Note that the proof is actually a
   polynomial-time reduction from an \NP-hard problem.)  Therefore, we
   can assume that every relation in $\Gamma$ is Horn or every
   relation in $\Gamma$ is flip separable. We now give the proof for
   the former case, while the proof for the latter case can be found
   in Appendix.

Assume now that $\Gamma$ is Horn, and there is a relation $R\in \Gamma$ that is not
IHS-B$-$. We prove that $\prob{\{R\}}$ is NP-hard. It is shown in the proof of
Lemma~5.27 of~\cite{Creignou01:book} that then $R$ is at least ternary and one can
permute the coordinates in $R$ and then substitute 0 and 1 in $R$ in such a way that
that the ternary relation $R'(x,y,z)=R(x,y,z,0,\ldots,0,1,\ldots,1)$ has the
following properties:
\begin{enumerate}
\item $R'$ contains tuples $(1,1,1), (0,1,0), (1,0,0), (0,0,0)$, and

\item $R'$ does not contain the tuple $(1,1,0)$.
\end{enumerate}

Note that if $(0,0,1)\in R'$ then $R'(x,x,y)$ is $x\to y$. If $(0,0,1)\not\in R'$
then, since $R$ (and hence $R'$) is Horn (i.e., min-closed), at least one of of the
tuples $(1,0,1)$ and $(0,1,1)$ is not in $R'$. Then it is easy to check that at
least one of the relations $R'(x,y,x)$ and $R'(y,x,x)$ is $x\to y$. Hence, we can
use constraints of the form $x\to y$ when specifying instances of $\prob{\{R'\}}$.

We reduce \textsc{Minimum Dominating Set} to \prob{\{R'\}}. Let $G(V,E)$ be a graph
with $n$ vertices and $m$ edges where a dominating set of size at most $t$ has to be
found. Let $v_1$, $\dots$, $v_n$ be the vertices of $G$. Let $S=3m$. We construct a
formula with $nS+2m+1$ variables as follows:

\begin{itemize}
\item There is a special variable $x$.
\item For every $1 \le i \le n$, there are $S$ variables
  $x_{i,1}$, $\dots$, $x_{i,S}$. There is a constraint $x_{i,j}\to
  x_{i,j'}$ for every $1 \le j,j' \le n$.
\item For every $1 \le i \le n$, if $v_{s_1}$, $\dots$, $v_{s_d}$ are
  the neighbors of $v_i$, then there are $d$ variables $y_{i,1}$,
  $\dots$, $y_{i,d}$ and the following constraints:
$x_{s_1,1}\to y_{i,1}$,  $R'(x_{s_2,1},y_{i,1},y_{i,2})$,
  $R'(x_{s_3,1},y_{i,2},y_{i,3})$, $\dots$, $R'(x_{s_{d},1},y_{i,d-1},y_{i,d})$,
  $R'(x_{i,1},y_{i,d},x)$.
\item For every variable $z$, there is a constraint $x\to z$.
\end{itemize}
Observe that the number of variables of type $y_{i,j}$ is exactly $2m$. Setting
every variable to 1 is a satisfying assignment. Set $k:=St+S-1$.

Assume that there is a satisfying assignment where the number of 0's is at most $k$
(but positive). Variable $x$ has to be 0, otherwise every other variable is 1. If
$x_{i,1}$ is 0, then $x_{i,j}$ is 0 for every $1 \le j \le S$. Thus $k<S(t+1)$
implies that there are at most $t$ values of $i$ such that $x_{i,1}$ is 0. Let $D$
consist of all vertices $v_i$ such that $x_{i,1}$ is 0. We claim that $D$ is a
dominating set. Suppose that some vertex $v_i$ is not dominated. This means that if
$v_{s_1}$, $\dots$, $v_{s_d}$ are the neighbors of $v_i$, then the variables
$x_{s_1,1}$, $\dots$, $x_{s_d,1}$, $x_{i,1}$ all have the value 1. However, this
means that these variables force variables $y_{i,1}$, $\dots$, $y_{i,d}$ and
variable $x$ to value 1, a contradiction. Thus $D$ is a dominating set of size at
most $t$.

The reverse direction is also easy to see.  Assume that $G$ has a dominating set $D$
of size at most $t$. For every $1 \le i \le n$ and $1 \le j \le S$, set variable
$x_{i,j}$ to 1 if and only $v_i$ is not contained in $D$. Set $x$ to 0. It is easy
to see that this assignment can be extended to the variables $y_{i,j}$ to obtain a
satisfying assignment: indeed, if $v_{s_1}$, $\dots$, $v_{s_d}$ are the neighbors of
$v_i$ and none of them is in $D$ then $v_i\in D$, and we set $y_{i,1}=\ldots
=y_{i,d}=1$. Otherwise, if $j$ is minimal such that $v_{s_j}\in D$, we set
$y_{i,1}=\ldots =y_{i,j-1}=1$ and $y_{i,q}=0$ for $q\ge j$.
This satisfying assignment contains at most $St+2m+1 \le k$ variables with value 0,
as required.

Finally, we reduce $\prob{\{R'\}}$ to $\prob{\{R\}}$ (and so to
$\prob\Gamma$). Take an instance $(\phi,f,k)$ of $\prob{\{R'\}}$, let
$V$ be the variables of $\phi$ and $c_1,\ldots,c_p$ the constraints of
$\phi$. We build an instance $\phi'$ of
$\prob{\{R\}}$ as follows.
\begin{enumerate}
\item For each $1\le i \le \max(p,k+1)$, introduce new variables $x_0^i, x_1^i$.

\item For each constraint $c_i=R'(x,y,z)$ in formula $\phi$, replace
  it by  the constraint
$R(x,y,z,x_0^i,\ldots,x_0^i,x_1^i,\ldots,x_1^i)$.

\item For each ordered pair $(i,j)$ where $1\le i,j \le \max(p,k+1)$, add the
constraints $R(x_0^i,x_0^i,x_0^j,x_0^j,\ldots,x_0^j,x_1^j,\ldots,x_1^j)$ and
$R(x_1^j,x_1^j,x_1^i,x_0^j,\ldots,x_0^j,x_1^j,\ldots,x_1^j)$.
\end{enumerate}
Finally, extend $f$ so that, for all $i$, we have $x_0^i=0$ and $x_1^i=1$. It is
clear that the obtained mapping $f'$ is a solution to the new instance. Note that,
by the choice of $R'$, the tuple $(1,1,0,0,\ldots,0,1,\ldots,1)$ does not belong to
$R$. Hence, the constraints added in step (3) above ensure that if a variable of the
form $x_0^i$ or $x_1^i$ in $f'$ is flipped then, in order to get a solution to
$\phi'$ different from $f'$, one must flip at least one of $x_0^i$ or $x_1^i$ for
each $1\le i \le \max(p,k+1)$. Consequently, all solutions to $\phi'$ that lie within
distance $k$ from $f'$ must agree with $f'$ on all such variables. In other words,
searching for such a solution, it makes sense to flip only variables from $V$. Thus,
clearly, the instances $(\phi,f,k)$ and $(\phi',f',k)$ are equivalent.
 \qed\end{proof}

\bibliography{../../../bibtex/csp2}
\bibliographystyle{plain}

\clearpage

\appendix

\section*{Appendix}

The following proposition completes the proof of Theorem~\ref{thm:poly}.

\begin{proposition}\label{flip-not-w2aff}
If $R$ is a flip separable relation that is not width-2 affine then $\prob{\{R\}}$
is $\NP$-hard.
\end{proposition}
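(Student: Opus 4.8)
The plan is to reduce from an NP-hard local-search or minimum-weight problem, mirroring the structure of the hardness proofs in Section~\ref{sect:fpt}, but now in a purely polynomial-time (rather than parameterized) setting. The key observation is that a flip-separable relation that is \emph{not} width-2 affine must, under the usual substitution-and-identification operations, be able to express something richer than mere equalities and disequalities. So the first step is to understand the structural consequence of ``flip separable but not width-2 affine.'' A width-2 affine relation is a conjunction of $x=y$ and $x\neq y$ constraints, so its solution set is determined by a partition of the coordinates into two classes with a parity constraint linking them; equivalently, every coordinate is either frozen relative to a fixed coordinate or its negation. If $R$ is not of this form, then after substituting constants and identifying variables I expect to isolate a low-arity ``core'' relation $R'$ that fails to be width-2 affine, witnessed by a tuple together with a flip set whose behaviour cannot be explained by pairwise $=/\neq$ constraints. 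I would first extract such a small witness relation explicitly, analogous to how Lemma~\ref{lem:W1-main} extracts $R'_1$ and the tuple $(s_1,\dots,s_r)$ with $S_1\subset S_2$.

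Next I would argue that this core relation lets me simulate a constraint expressive enough to encode a genuinely hard weight-minimization problem. The natural target is something like \textsc{Minimum Dominating Set} or another NP-hard \textsc{Min-Ones}-type problem, exactly as in the Horn case handled in the main text via $R'(x,y,z)$. The crucial difference is that flip separability restricts which tuples accompany a given flip, so the simulation of the ``$x\vee y$'' or implication gadget must be done carefully: I would use the same constant-simulation trick (introducing paired variables $q_0^j,q_1^j$ whose values are pinned by auxiliary constraints, as in the proof of Lemma~\ref{lem:W1-main}) to fix the substituted constants, and then use copies of the relation to build the edge/coverage gadgets. Since we are now in the polynomial-time world, I do not need to bound a parameter $k$; I only need the reduction to be polynomial and the weight comparison ``$w(f')<w(f)$'' to faithfully encode the optimization threshold.

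The main obstacle, I expect, is precisely the interaction between flip separability and \emph{non}-width-2-affineness: flip separability is a strong closure property (it forces flip sets to compose and cancel), so one must verify that being non-width-2-affine still leaves enough freedom to build an asymmetric gadget whose weight behaviour is nontrivial. Concretely, the danger is that every gadget one tries to build collapses into equalities/disequalities, which would make the instance width-2 affine and hence polynomially solvable --- contradicting the hoped-for hardness. The heart of the argument is therefore a lemma of the form: if $R$ is flip separable but not width-2 affine, then there exist a tuple $a\in R$ and coordinates exhibiting a flip pattern that is not pairwise-consistent, and this pattern can be ``read off'' as a weight-changing gadget. Once that structural lemma is in hand, the remaining work --- pinning constants, assembling the reduction from \textsc{Minimum Dominating Set}, and checking that lighter satisfying assignments correspond exactly to smaller dominating sets --- follows the template already established in the main proof of Theorem~\ref{thm:poly} and should be routine.
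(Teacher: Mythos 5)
Your proposal correctly anticipates some of the machinery (pinning constants via auxiliary variables, extracting a low-arity core by substitution and identification --- this is what the paper's Lemmas~\ref{add-rel} and~\ref{add-const} do), but it has a genuine gap, and the concrete route you sketch would fail. First, you never make the case distinction that drives the paper's entire argument: ``flip separable but not width-2 affine'' splits into (i) \emph{affine} but not width-2, and (ii) flip separable but \emph{not affine}, and these require different structural analyses and different source problems (Lemmas~\ref{affine-hard} and~\ref{affine-flip-hard}). Second, your chosen reduction target --- \textsc{Minimum Dominating Set} via implication gadgets, ``exactly as in the Horn case'' --- provably cannot work in sub-case (i): any gadget built from affine relations by conjunction, substituting constants, and projection again defines an affine relation, and $x\to y$ is not affine (it has $3$ tuples, not a power of $2$). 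The same obstruction is the reason the Horn proof does not transfer: the asymmetric propagation that the dominating-set reduction needs is exactly what is unavailable here. This is precisely the ``collapse'' danger you name yourself, but you offer no way around it; the structural lemma you call ``the heart of the argument'' is deferred rather than supplied.

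What the paper actually does instead: in sub-case (i) it isolates (after a linear-algebraic minimality argument) a relation of the form $x+y+z=1$, possibly with coordinates linked by disequalities, and reduces from \emph{exact} \textsc{1-in-3 Sat} in which every variable occurs in exactly $3$ constraints --- the point being that then \emph{every} satisfying assignment of the source instance has weight exactly $n/3$, so the comparison $w(f')<w(f)$ can encode plain satisfiability (Lemmas~\ref{lem:affine3} and~\ref{base-aff-rel}). In sub-case (ii) it uses the structure results of Creignou et al.\ to extract essentially $R_{\textsc{1-in-3}}$ (the relation $\{(0,0,0),(0,1,1),(1,0,1)\}$ up to permutation, i.e.\ $T_{\{2,3,4\}}$), and then proves NP-hardness of $\prob{\{R_{\textsc{1-in-3}}\}}$ by reduction from $\prob{\{x\vee y\}}$ --- whose NP-hardness comes from the clique reduction of Lemma~\ref{lem:orhard} --- simulating $u\vee v$ through \emph{negated} copies $x^0_u,x^0_v$ and controlling the weight accounting with a polynomial blow-up of $S$ copies per variable (Lemma~\ref{lem:1in3}). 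Neither of these two ideas (same-weight-solutions trick for the affine case; reduction from the local-search vertex-cover problem rather than from a \textsc{Min-Ones}-type problem for the non-affine case) appears in your plan, and without them the reduction you describe cannot be completed.
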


The above proposition will be proved through a sequence of lemmas,
the main lemmas being Lemma~\ref{affine-hard} and
Lemma~\ref{affine-flip-hard}.

We need three auxiliary lemmas that will be used in the subsequent proofs.

\begin{lemma}\label{add-rel}
Let $R$ is the set of solutions to a $\Gamma$-formula $\phi$, and let
$R'=\proj_J(R)$ where $J=\{j \mid \proj_j(R)=\{0,1\}\}$. Then the problems
$\prob{\Gamma}$ and $\prob{\Gamma\cup\{R'\}}$ are polynomial-time equivalent.
\end{lemma}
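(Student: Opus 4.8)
The plan is to prove the two polynomial-time reductions separately. One direction is immediate: since $\Gamma\subseteq\Gamma\cup\{R'\}$, every $\Gamma$-formula is already a $(\Gamma\cup\{R'\})$-formula, so the identity map reduces $\prob{\Gamma}$ to $\prob{\Gamma\cup\{R'\}}$. All the work is in the converse reduction, which turns an instance of $\prob{\Gamma\cup\{R'\}}$ into an equivalent instance of $\prob{\Gamma}$ by replacing each $R'$-constraint with a gadget built from the defining formula $\phi$.

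Write $\phi$ on variables indexed by $1,\dots,r$, so that $R\subseteq\{0,1\}^r$ is its solution set, and recall that $J$ is the set of coordinates on which $R$ is non-constant; for each $j\notin J$ let $c_j$ be the unique value taken by coordinate $j$ across all tuples of $R$. Given an instance $(\psi,g,k)$ of $\prob{\Gamma\cup\{R'\}}$, I would build $\psi'$ as follows: keep every $\Gamma$-constraint of $\psi$ unchanged, and replace each constraint $R'(\bar z)$ (whose scope supplies the $J$-coordinates) by a fresh copy of $\phi$, identifying the $J$-coordinate positions of this copy with the scope variables $\bar z$ in order and using brand-new variables $w_j$ for the positions $j\notin J$. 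Extend $g$ to $g'$ by setting each fresh variable $w_j$ to its value $c_j$. Because $\proj_J(R)=R'$ and the non-$J$ coordinates of every tuple of $R$ equal the $c_j$, the tuple $(g(\bar z))$ extended by the $c_j$ lies in $R$, so each copy of $\phi$ is satisfied and $g'$ is a satisfying assignment of $\psi'$.

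The key observation that makes the reduction preserve weight and distance \emph{exactly} is that the fresh variables are frozen: since $\proj_j(R)=\{c_j\}$ for $j\notin J$, every satisfying assignment of a copy of $\phi$ must assign $c_j$ to $w_j$. Hence any satisfying assignment $h'$ of $\psi'$ agrees with $g'$ on all fresh variables, so on these variables $h'$ contributes neither to $\dist(g',h')$ nor to any weight difference. From here the correspondence is routine: given a lighter nearby solution $h$ of $\psi$, extending it by the $c_j$ yields an $h'$ with $\dist(g',h')=\dist(g,h)$ and $w(h')-w(g')=w(h)-w(g)$; conversely, restricting a lighter nearby solution $h'$ of $\psi'$ to the variables of $\psi$ gives an $h$ that satisfies every $R'$-constraint, because the corresponding copy of $\phi$ forces its solution tuple into $R$ and hence its $J$-projection into $R'=\proj_J(R)$, and that again satisfies $w(h)<w(g)$ and $\dist(g,h)\le k$ by the freezing observation.

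Finally I would note that the reduction runs in polynomial time: $\Gamma$, $\phi$, $R$ and $R'$ are fixed, so each $R'$-constraint is replaced by a constant-size gadget. The one point demanding care — and really the crux of the argument — is the freezing of the fresh variables; without it a better solution of $\psi'$ could in principle save weight by flipping a gadget variable, breaking the exact correspondence of weights and distances. The constancy of the non-$J$ coordinates of $R$ is exactly what rules this out, so no auxiliary constant-enforcing copies (as used in Lemma~\ref{lem:W1-main}) are needed here.
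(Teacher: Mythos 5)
Your proof is correct and follows essentially the same route as the paper: replace each $R'$-constraint by a fresh copy of $\phi$ whose non-$J$ variables are new, and use the fact that every solution of $\phi$ is constant on the non-$J$ coordinates (your ``freezing'' observation) to conclude that weights and distances correspond exactly. The paper's own proof is just a terser version of this argument, leaving the trivial direction and the extension of the initial assignment implicit.
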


\begin{proof}
  Note that, for any fixed $j\not\in J$, each solution to $\phi$ takes
  the same value on the corresponding variable. In every instance of
  $\CSP{\Gamma\cup\{R'\}}$, every constraint of the form
  $c=R'(\ov{s})$ can be replaced by the constraints from $\phi$ where
  variables from $\ov{s}$ keep their places, while all other variables
  are new and do not appear elsewhere. This transformation (with the
  distance $k$ unchanged) is a polynomial-time reduction from
  $\prob{\Gamma\cup\{R'\}}$ to $\prob{\Gamma}$ \qed\end{proof}

Let $C_0=\{0\}$ and $C_1=\{1\}$, and let $\Gamma^c=\Gamma\cup\{C_0,C_1\}$.

\begin{lemma}\label{add-const}
If $\Gamma$ contains at least one of the relations $=$ or $\ne$ then $\prob{\Gamma}$
is polynomial-time equivalent to $\prob{\Gamma^c}$.
\end{lemma}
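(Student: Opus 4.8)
The plan is to prove both reductions making up the claimed polynomial-time equivalence. One direction is immediate: since $\Gamma\subseteq\Gamma^c$, every instance of $\prob{\Gamma}$ is already an instance of $\prob{\Gamma^c}$, so the identity map reduces $\prob{\Gamma}$ to $\prob{\Gamma^c}$. All the content lies in the reverse reduction $\prob{\Gamma^c}\le\prob{\Gamma}$, where I must simulate the constant constraints $C_0(x)$ and $C_1(y)$ using only relations of $\Gamma$ together with the available relation ($=$ or $\ne$). The governing idea of local search is exploited here: if a group of variables is chained together so tightly that changing any one of them forces changing more than $k$ of them, then within distance $k$ that whole group is \emph{frozen} at its initial values and can serve as a reliable source of the constants $0$ and $1$.

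Concretely, given an instance $(\phi,f,k)$ of $\prob{\Gamma^c}$ I would build $(\phi',f',k)$ as follows. First handle the case $=\,\in\Gamma$. Introduce $k+1$ fresh variables $b_1,\dots,b_{k+1}$ with the equality constraints $b_i=b_{i+1}$ and set them all to $1$ in $f'$; similarly introduce $k+1$ fresh variables $a_1,\dots,a_{k+1}$ chained by equality and set them all to $0$. Replace each constraint $C_0(x)$ by $x=a_1$ and each $C_1(y)$ by $y=b_1$, keeping all other constraints of $\phi$, and extend $f$ to $f'$ by the stated anchor values. In the case $\ne\,\in\Gamma$, a single chain $a_1\ne a_2\ne\cdots\ne a_{k+1}$ suffices: its only two satisfying patterns are the two alternating ones, so I set $f'(a_i)=0$ for odd $i$ and $1$ for even $i$; then $a_1$ is a $0$-source and $a_2$ a $1$-source, and I simulate $C_0(x)$ by $x\ne a_2$ and $C_1(y)$ by $y\ne a_1$. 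In either case I put $k':=k$.

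It remains to verify equivalence of the two instances. For the forward direction, a lighter solution $f'_1$ within distance $k$ of $f$ extends to $\phi'$ by assigning the anchor variables their frozen $f'$-values; this satisfies every new constraint (the chain constraints by construction, each simulated constant constraint because the relevant anchor holds the correct value and $f'_1$ obeyed the original $C_0$/$C_1$), keeps the distance at most $k$, and leaves the weight contribution of the anchors unchanged so that strict weight decrease is preserved. The main obstacle is the backward direction, and it is exactly here that the length-$(k+1)$ chains earn their keep: any solution to $\phi'$ within distance $k$ of $f'$ cannot alter the anchors, because the equality (resp.\ alternating disequality) constraints force a single flip inside a chain to propagate to all $k+1$ of its variables, exceeding the budget. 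Hence the anchors stay at $0$ and $1$, the simulated constraints genuinely enforce $C_0$ and $C_1$, and restricting the solution to the original variables yields a satisfying assignment of $\phi$ that is strictly lighter and within distance $k$ of $f$ (again since the frozen anchors contribute equally to both weights and not at all to the distance). This establishes $\prob{\Gamma^c}\le\prob{\Gamma}$ and completes the equivalence.
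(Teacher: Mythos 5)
Your proof is correct and follows essentially the same approach as the paper: simulate the constants $C_0,C_1$ by freezing variables with chains of $=$ (or $\ne$) constraints long enough that any flip would have to propagate to more than $k$ variables, hence cannot occur within the distance budget. The only cosmetic differences are that you link all constant-constrained variables to two shared anchor chains of length $k+1$, whereas the paper attaches $k$ private equality-copies $y_{x,1},\dots,y_{x,k}$ to each such variable $x$ (freezing $x$ at its initial, necessarily correct, value), and that you spell out the $\ne$ case explicitly, which the paper dismisses as ``very similar.''
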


\begin{proof}
Assume that $\Gamma$ contains the equality relation, the other case
  is very similar.  We only need to provide a reduction from
  $\prob{\Gamma^c}$ to $\prob{\Gamma}$. Let
  $(\phi,f,k)$ be an instance of the former problem. For every
  variable $x$ in $\phi$ such that $\phi$ contains the constraint $C_0(x)$
  or $C_1(x)$ (i.e., the value of variable $x$ is
  forced to be a constant 0 or 1), remove this constraint, introduce $k$
  new variables $y_{x,1},\ldots,y_{x,k}$, and add new constraints
  $x=y_{x,i}$, ($i=1,\ldots,k$). Call the obtained $\Gamma$-formula
  $\phi'$.  Extend $f$ to a solution $f'$ of $\phi'$ by setting
  $f'(y_{x,i})=f(x)$ for all $i$, and consider the instance
  $(\phi',f',k)$ of $\prob{\Gamma}$. The equality constraints in
  $\phi'$ ensure that every solution to $\phi'$ whose distance is at
  most $k$ from $f'$ must coincide with $f'$ on all variables of the
  form $y_{x,i}$. It is clear now that the instances $(\phi,f,k)$ and
  $(\phi',f',k)$ are equivalent.
\qed\end{proof}

Let $R_1$ be an arbitrary $n$-ary relation ($n\ge 2)$ and let $R_2$ be
an $(n+1$)-ary relation such that
$(a_1 \zd a_n,a_{n+1})\in R_2 \Leftrightarrow (a_1
\zd a_n)\in R_1$ and $a_n\ne a_{n+1}$.

\begin{lemma}\label{add-neq}
The problems $\prob{\{R_1,\ne\}}$ and $\prob{\{R_2,\ne\}}$ are polynomial time
equivalent.
\end{lemma}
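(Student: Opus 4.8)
The plan is to give polynomial-time reductions in both directions, with the reduction from $\prob{\{R_2,\ne\}}$ to $\prob{\{R_1,\ne\}}$ being immediate and essentially all the work going into the reverse one. First I would dispose of the easy direction. Since $R_2(x_1,\dots,x_n,x_{n+1})$ holds exactly when $(x_1,\dots,x_n)\in R_1$ and $x_n\ne x_{n+1}$, any instance of $\prob{\{R_2,\ne\}}$ is turned into an equivalent instance of $\prob{\{R_1,\ne\}}$ by replacing each constraint $R_2(x_1,\dots,x_n,x_{n+1})$ with the two constraints $R_1(x_1,\dots,x_n)$ and $x_n\ne x_{n+1}$. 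This keeps the variable set, the assignment $f$, and the bound $k$ unchanged, and the set of satisfying assignments is literally the same, so weight and distance are preserved exactly.

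The substantial direction is the reduction from $\prob{\{R_1,\ne\}}$ to $\prob{\{R_2,\ne\}}$, where I must simulate an $R_1$-constraint using only $R_2$ and $\ne$. The obvious idea is that $R_1(x_1,\dots,x_n)$ is equivalent to $\exists w\,R_2(x_1,\dots,x_n,w)$, the witness $w$ being forced to equal $1-x_n$. The difficulty, and this is the crux of the whole lemma, is that this forced extra variable interferes with both quantities that matter here, namely the weight and the distance: because $w$ always moves opposite to $x_n$, a fresh $w$ per constraint \emph{masks} exactly those weight changes that come from flipping a last-coordinate variable, and at the same time flipping such a variable now forces $w$ to flip as well, so the distance gets distorted by an amount depending on how often a variable occurs in a last coordinate. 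A naive gadget therefore preserves neither the condition $w(f')<w(f)$ nor the bound $\dist(f,f')\le k$, and I expect this simultaneous control of weight and distance to be the main obstacle.

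To get around this I would install a uniform gadget on the variables rather than on the constraints. For every variable $x$ of the input formula introduce two new variables $\tilde x$ and $\hat x$ together with the constraints $x\ne\tilde x$ and $\tilde x\ne\hat x$, which force $\tilde x=1-x$ and $\hat x=x$; set $f'(\tilde x)=1-f(x)$ and $f'(\hat x)=f(x)$, keep the original $\ne$-constraints on the old variables, and simulate each constraint $R_1(x_{i_1},\dots,x_{i_n})$ by $R_2(x_{i_1},\dots,x_{i_n},\tilde x_{i_n})$. The point is that the already-present complement $\tilde x_{i_n}$ plays the role of the determined last coordinate (so no per-constraint fresh variable is needed, and $R_2(\dots,\tilde x_{i_n})$ reduces to $R_1(\dots)$ since its inequality part holds automatically), the complement $\tilde x$ is what repairs the distance, and the copy $\hat x$ is what repairs the weight. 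Concretely, every original variable now drags along exactly one complement and one copy, so the total weight of any satisfying assignment equals its weight on the original variables plus the constant $|V|$ (the complement and copy of each $x$ contribute $(1-x)+x=1$ between them), and the distance from $f'$ is exactly three times the distance of its restriction to the original variables from $f$.

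Setting $k':=3k$ then makes the correspondence exact: extending a lighter solution of the original instance within distance $k$ yields a lighter solution within distance $3k$, and restricting any lighter solution within distance $3k$ back to the original variables yields one within distance $k$, since restriction only decreases distance and the weight offset is constant. The routine parts left to check are the two forced-value equivalences and the weight and distance bookkeeping just sketched; the complement-plus-copy construction is designed precisely so that these computations close up, giving the desired polynomial-time equivalence.
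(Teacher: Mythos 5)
Your proposal is correct and follows essentially the same route as the paper: the easy direction by unfolding $R_2$ into $R_1$ plus $\ne$, and the hard direction via the identical per-variable gadget (a forced complement $\tilde x$ and a forced copy $\hat x$, i.e.\ the paper's $x'$ and $x''$, with constraints $x\ne \tilde x$, $\tilde x\ne \hat x$), replacing $R_1(x_{i_1},\dots,x_{i_n})$ by $R_2(x_{i_1},\dots,x_{i_n},\tilde x_{i_n})$ and setting the new distance bound to $3k$. In fact, your weight/distance bookkeeping (constant weight offset $|V|$, distance scaled exactly by $3$) spells out the equivalence that the paper leaves as ``clear.''
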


\begin{proof}
It is clear that $\prob{\{R_2,\ne\}}$ reduces to $\prob{\{R_1,\ne\}}$, by simply
replacing each constraint involving $R_2$ by its definition via $R_1$ and $\ne$. Let
us reduce $\prob{\{R_1,\ne\}}$ to $\prob{\{R_2,\ne\}}$. Let $(\phi,f,k)$ be an
instance of $\prob{\{R_1,\ne\}}$ where $\phi$ is over a set $V$ of variables. For
each variable $x\in V$, introduce two new variables $x',x''$ along with constraints
$x\ne x'$, $x'\ne x''$. Replace every constraint of the form $R_1(x_{i_1}\zd
x_{i_n})$ in $\phi$ by $R_2(x_{i_1}\zd x_{i_n},x'_{i_n})$, and leave all other
constraints in $\phi$ unchanged. Let $\phi'$ be the obtained instance of
$\CSP{\{R_2,\ne\}}$. Clearly, $f$ has a unique extension to a solution $f'$ to $I'$.
It is clear that the instance $(\phi',f',3k)$ of $\prob{\{R_2,\ne\}}$ is equivalent
to $(\phi,f,k)$.
\qed\end{proof}

We consider affine constraints first.

\begin{lemma}\label{lem:affine3}
If $\Gamma$ contains the ternary relation defined by equation $x+y+z=1$, then
$\prob{\Gamma}$ is \NP-hard.
\end{lemma}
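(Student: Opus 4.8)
The plan is to exploit a special feature of the relation $R(x,y,z)\colon x+y+z=1$ (which is affine, hence flip separable, and not width-2 affine, so $\NP$-hardness is what Theorem~\ref{thm:poly} predicts): since $1+1+1=1$ modulo $2$, the all-ones assignment satisfies \emph{every} $R$-constraint. Starting the local search from $f=\mathbf{1}$ is therefore always legal, and it simplifies the lighter-solution condition dramatically. If $f'=f\oplus v$ is any assignment, then, because every coordinate of $f$ equals $1$, every flipped coordinate turns a $1$ into a $0$; hence $w(f')<w(f)$ holds as soon as $v\neq 0$, and $\dist(f,f')=|v|$. Thus, with the all-ones start, an instance $(\phi,\mathbf{1},k)$ of $\prob{\{R\}}$ is a yes-instance if and only if the binary linear code $N$ defined by the homogeneous constraints $x+y+z=0$ (one per constraint triple of $\phi$) contains a nonzero codeword of weight at most $k$. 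In other words, the all-ones trick turns $\prob{\{R\}}$ into the bounded minimum-distance problem for a code whose parity checks have weight three.

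I would then reduce from the (well-known $\NP$-hard) minimum-distance problem for binary linear codes: given parity checks and a bound $k$, decide whether there is a nonzero codeword of weight at most $k$. Before doing so it is convenient to enlarge the usable constraint language. Identifying the last two coordinates gives $R(x,y,y)\colon x=1$, so the constant $1$ is available; substituting it gives $R(x,y,1)\colon x=y$, and then Lemma~\ref{add-const} supplies the other constant $0$, after which $R(x,y,0)\colon x\neq y$. With $=$, $\neq$, and both constants in hand I can express an arbitrary parity check $x_1+\dots+x_t=b$ by chaining ternary constraints through fresh auxiliary variables (using $\neq$-links to introduce the negations needed to realise $=0$ rather than $=1$ checks), exactly the kind of substitution already licensed by Lemmas~\ref{add-rel} and~\ref{add-neq}. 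Handing the all-ones assignment as the start then makes a lighter solution within the suitably scaled distance bound correspond to a low-weight nonzero codeword of the target code.

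The hard part will be the weight-and-distance bookkeeping for the auxiliary variables. A codeword of the target code, once lifted to the gadget, forces specific values on the auxiliaries, and these contribute both to the weight (which could spoil the lighter test) and to the distance (which could spoil the $\dist\le k$ bound). The intended fix is to make every auxiliary \emph{weight-neutral}: pair it with a complementary variable via a $\neq$-constraint, so that the pair contributes exactly one to the weight of every solution -- a constant offset that is invisible to the comparison $w(f')<w(f)$ -- while contributing only a bounded amount to the distance. Since each original coordinate then spawns only a constant number of gadget coordinates, the distance bound $k$ needs merely to be scaled by a constant factor, precisely as in the $3k$ scaling of Lemma~\ref{add-neq}. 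With the auxiliaries neutralised, yes-instances of the minimum-distance problem correspond exactly to lighter neighbours of the all-ones start, so $\prob{\{R\}}$, and hence $\prob{\Gamma}$, is $\NP$-hard.
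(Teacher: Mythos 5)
Your opening move is both correct and genuinely different from the paper's route: since $1+1+1=1$, the all-ones assignment satisfies every constraint $x+y+z=1$, the solution set of such a formula is a coset of the binary linear code cut out by the homogeneous checks, and from the all-ones start every flip is $1\to 0$; hence $(\phi,\mathbf{1},k)$ is a yes-instance if and only if that code has a nonzero codeword of weight at most $k$. Your derivation of $=$, the constants and $\neq$ is exactly the paper's first step (via Lemmas~\ref{add-rel} and~\ref{add-const}), and reducing from Vardy's minimum-distance problem instead of the paper's source (a weighted gadget reduction from \textsc{1-in-3 Sat} in which every variable occurs in exactly three constraints) would be a legitimate alternative. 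The gap is in the sparsification step, concretely in the claim that ``the distance bound $k$ needs merely to be scaled by a constant factor.''

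The complementary pairing does make the auxiliaries weight-neutral, so $w(f')<w(f)$ holds iff some original variable flips; that half of the bookkeeping is fine. The distance bookkeeping is not. First, the number of auxiliary flips forced by a codeword $v$ is not $O(|v|)$: in the chain realizing a length-$t$ check on $x_{i_1},\dots,x_{i_t}$, the auxiliary carrying the partial sum $v_{i_1}+\cdots+v_{i_j}$ flips for every $j$ at which that sum is odd, so already the restriction $v_{i_1}=v_{i_t}=1$ (weight $2$) flips all $t-2$ auxiliaries of that chain; this is unlike Lemma~\ref{add-neq}, where each auxiliary is tied to a single original variable and flips exactly when it does. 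Second, and fatally, with any bound of the form $k'=ck$ (or even $k'=k+2M$ where $M$ is the total number of auxiliaries) the backward direction fails: a codeword of original weight between $k+1$ and $k'$ whose partial sums happen to vanish yields a lighter solution within distance $k'$ even though the minimum-distance instance is a no-instance. A single check $x_1+\cdots+x_n=0$ with $k=1$, $k'=3$ and $v=e_1+e_2$ is already a counterexample. What is missing is an amplification step separating the two cost scales: replicate each original variable into $N>2M$ copies joined by equality constraints (equality is available to you), so that an original flip costs $N$ while an auxiliary pair costs $2$, and set $k':=Nk+2M$; then distance at most $k'$ forces original weight at most $k+2M/N<k+1$. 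This replication trick is precisely what the paper's own proof relies on (the $2n$ copies $x_{i,1},\dots,x_{i,2n}$ and the block of $2n^2$ variables $y_j$). With that addition your argument goes through; as written, the reduction is incorrect.
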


\begin{proof}
First, we show that we can assume that the relation $\ne$ is in
$\Gamma$. Notice that if $R$ is the set of all solutions to the
instance $x+y+z=1, z+z+z=1$. Then $\proj_{\{3\}}(R)=1$ while
$\proj_{\{1,2\}}(R)$ is the equality relation. By
Lemma~\ref{add-rel}, we may assume that the relation $=$ is in
$\Gamma$. Then, by Lemma~\ref{add-const}, we can assume that
$C_0,C_1\in \Gamma$. By considering the instance $C_0(z),x+y+z=1$
and using Lemma~\ref{add-rel} again, we get that $\prob\Gamma$ is
polynomial-time equivalent to $\prob{\Gamma \cup\{\ne\}}$.

The hardness proof is by reduction from $\CSP{R_{\textsc{1-in-3}}}$, which is known
to be NP-hard even if every variable appears in exactly 3 constraints and each
  constraint contains 3 distinct variables \cite{Moore01:tiling}. (This implies that the
  number of variables equals the number of constraints and the weight
  of every solution is exactly $n/3$, where $n$ is the number of variables.) Given a
  $\CSP{R_{\textsc{1-in-3}}}$ formula
  $\phi$ with $n$ variables $x_1$, $\dots$, $x_n$, we construct a
  $\Gamma$-formula $\phi'$
  with variables
\begin{itemize}
\item $x_{i,j}$ for $1\le i \le n$, $1 \le j \le 2n$,
\item $v_t$ for every $1 \le t \le n$,
\item $y_j$ for every $0 \le j \le 2n^2$.
\end{itemize}
For every $1 \le i \le n$, $1 \le j,j' \le 2n$, we add the constraint
$x_{i,j}=x_{i,j'}$. For every $1 \le j,j' \le 2n^2$, we add the
constraint $y_j=y_{j'}$.
Assume that the $t$-th constraint in $\phi$ is on variables $x_a$, $x_b$, $x_c$. For
$1 \le j \le 2n$, we add the constraints $x_{a,j}+x_{b,j}+v_t=1$ and
$x_{c,j}+v_t+y_0=1$. Finally, we add the constraint
$y_0\neq y_1$.
 Define  assignment $f$ such that $f(v_t)=1$ for $1\le t \le n$
and $f(y_j)=1$ for $1\le i \le 2n^2$, and every other variable is 0. The weight of
$f$ is $2n^2+n$. Set $k:=2n^2+1+n+2n\cdot n/3$. This completes the description of
the reduction.

Assume that $\phi$ has a solution $f_0$. Define $f'$ such that
$f'(x_{i,j})=f_0(x_i)$ ($1 \le i \le n$, $1 \le j \le 2n$), $f'(y_0)=1$, $f'(y_j)=0$
($1 \le j \le 2n^2$). This assignment can be extended to each $v_t$ in a unique way:
if the $t$-th constraint in $\phi$ is on variables $x_a$, $x_b$, $x_c$, then exactly
two of the variables $x_{a,1},x_{b,1},x_{c,1},y_0$ have  value $1$, hence we can set
$v_t$ accordingly. Thus we can obtain a satisfying assignment $f'$ this way. Observe
that the weight of $f_0$ is exactly $n/3$: each variable with value 1 in $f_0$
appears in exactly 3 constraints and each constraint contains exactly one such
variable. Thus the weight of $f'$ is at most $2n\cdot n/3+1+n<2n^2+n$, strictly less
than the weight of $f$. Assignments $f$ and $f'$ differ on at most $2n^2+1+n+2n\cdot
n/3=k$ variables.

Assume now that $\phi'$ has a satisfying assignment $f'$ with $w(f')<w(f)$ and
$\dist(f,f')\le k$.
We claim that
$f'(y_0)=1$ and $f'(y_i)=0$ ($1 \le i \le 2n^2$). Otherwise, $w(f')<w(f)$ would
imply that $f'(v_t)=0$ for at least one $1\le t \le n$. But this means that there is
at least one $1 \le i \le n$ such that $f'(x_{i,j})=1$ for every $1\le j \le 2n$.
Thus the weight of $f'$ is at least $2n^2+2n>2n^2+n$, a contradiction.

Define $f_0(x_i):=f'(x_{i,1})$ for every $1\le i \le n$. The weight of
$f_0$ is at most $n/3$: otherwise, $\dist(f,f')$ would be at least
$2n(n/3+1)+2n^2+1>k$. Let $x_a$, $x_b$, $x_c$ be the variables in the
$t$-th constraint in $\phi$. The facts
$f'(x_{a,1})+f'(x_{b,1})+f'(v_t)=1$, $f'(x_{c,1})+f'(v_t)+f'(y_0)=1$
and $f'(y_0)=1$ imply that at least one of $f'(x_{a,1})$,
$f'(x_{b,1})$, $f'(x_{c,1})$ is 1. Thus if we denote by $X$ the set of
those variables of $\phi$ that have value $1$ in $f_0$, then each
constraint in $\phi$ contains at least one variable of $X$.  Moreover,
it is not possible that a constraint contains more than one variables
of $X$. To see this, observe that each variable is contained in 3
constraints, thus $|X|\le n/3$ implies that the variables of $X$
appear in at most $n$ constraints. However, we have seen that each
constraint contains at least one variable of $X$, hence the variables
of $X$ appear in {\em exactly} $n$ contraints. Equality is possible only
if any two variables of $X$ appear in disjoint constraints, that is,
no constraint contains more than one variables from $X$.  Therefore,
$f_0$ is a solution for the instance $\phi$. \qed\end{proof}

Define the 6-ary affine relation $R_6$ by the following system of equations:
$x_1+x_2+x_3=1, x_i+x_{i+3}=1 (1\le i\le 3)$. Let $J$ be a subset of $\{1\zd 6\}$,
and let $R_J$ denote the projection of $R_6$ onto the set of indices $J$.

\begin{lemma}\label{base-aff-rel}
$\prob{\{R_J\}}$ is \NP-hard for every $J\subseteq \{1,\dots,6\}$ containing at
least one number from each pair $i,i+3$ where $i=1,2,3$.
\end{lemma}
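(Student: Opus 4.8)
The plan is to reduce from $\prob{\{x+y+z=1\}}$, which is $\NP$-hard by Lemma~\ref{lem:affine3}. The starting point is that $R_6$ has exactly the four tuples obtained from $(x_1,x_2,x_3)\in\{(1,0,0),(0,1,0),(0,0,1),(1,1,1)\}$ by appending $(x_4,x_5,x_6)=(\bar x_1,\bar x_2,\bar x_3)$, so for a set $J$ meeting every pair the projection $R_J$ again has four tuples and is, up to permuting coordinates, the relation $x+y+z=1$ with some coordinates replaced by their complements (those indexed in $\{4,5,6\}$) and possibly some complementary coordinates duplicated (when both $i$ and $i+3$ lie in $J$). A short computation shows that after deleting duplicated coordinates the relation is $x+y+z=1$ when $J$ uses an even number of complemented representatives and $x+y+z=0$ when it uses an odd number. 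Thus the whole task splits into rendering the complementations and duplications harmless and then handling the two ternary cases.

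The first step makes $=$, the constants, and $\ne$ available for every $J$. By identifying coordinates in a single $R_J$-constraint one can force a coordinate to a constant while leaving two coordinates free and equal; for a minimal $J$ this is achieved by the diagonal constraint $R_J(w,w,w)$, which fixes $w$ (to $1$ in the $x+y+z=1$ case and to $0$ in the $x+y+z=0$ case), followed by $R_J(x,y,w)$, which forces $x=y$. Projecting onto the free coordinates (Lemma~\ref{add-rel}) yields the binary relation $=$, whence Lemma~\ref{add-const} supplies $C_0,C_1$; substituting the suitable constant into one coordinate of $R_J$ and projecting then yields $\ne$. Hence $\prob{\{R_J\}}$ is polynomially equivalent to $\prob{\{R_J,=,\ne,C_0,C_1\}}$. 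If $J$ contains a full pair $\{i,i+3\}$, the corresponding coordinates are always complementary, so the same manipulations go through and, once $\ne$ is available, Lemma~\ref{add-neq} deletes the redundant coordinate $i+3$; iterating reduces $J$ to a minimal set at the cost of keeping $\ne$ in the signature.

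It remains to finish the two minimal cases with $=,\ne,C_0,C_1$ in hand. If the relation is $x+y+z=1$ we are immediately done by Lemma~\ref{lem:affine3}. The interesting case is $x+y+z=0$, where I reduce from $\prob{\{x+y+z=1\}}$. Given a source instance I keep all its variables, take a variable $t$ forced to $1$ by $C_1$, and replace each source constraint $x+y+z=1$ by $x+y+s=0$ and $s+z+t=0$, which force $s=x+y$ and hence $x+y+z=1$; here $s$ is fresh and I add a companion $s'$ with $s\ne s'$. The point of $s'$ is that the pair $\{s,s'\}$ carries weight exactly $1$ in every satisfying assignment, so the target weight exceeds that of its source restriction by a constant ($m$ for the $m$ gadget pairs plus one for $t$), and therefore ``strictly lighter'' transfers exactly. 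To control the Hamming distance I replace every original variable by $S>2m$ copies tied together by $=$, so that a change of the original part is magnified by $S$ while the auxiliary variables contribute at most $2m$ to the distance; setting the new budget to $k'=Sk+2m$ then makes $\dist\le k'$ in the target equivalent to $\dist\le k$ in the source.

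The main obstacle is precisely the $0$-valid case $x+y+z=0$. Here $\ne$ cannot be obtained by any naive projection, since a projection of a $0$-valid relation is again $0$-valid; this is why one must first manufacture $=$ through the diagonal trick and only then unlock $C_0,C_1$ and $\ne$. The second delicate point is that simulating the odd-parity constraint $x+y+z=1$ by even-parity constraints unavoidably introduces auxiliary variables whose natural weight depends on the assignment, which would corrupt the ``lighter'' comparison; getting the complement-pair gadget and the amplification factor exactly right, so that both the weight comparison and the distance bound transfer faithfully, is the step I expect to demand the most care.
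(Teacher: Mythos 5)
Your overall architecture is essentially the paper's: derive the auxiliary relations by identification-and-projection (Lemmas~\ref{add-rel} and~\ref{add-const}), then use Lemma~\ref{add-neq} to strip complementary coordinates and land on a ternary parity relation. The genuine divergence is how you finish the parity-$0$ case. You build a from-scratch reduction from $\prob{\{x+y+z=1\}}$ with fresh pairs $s,s'$ tied by $s\ne s'$, a $C_1$-forced variable $t$, and $S$-fold duplication of the original variables; I checked the bookkeeping and it is correct (each pair $\{s,s'\}$ has weight exactly one in every solution, so weight comparisons transfer, and the amplification makes the distance bound transfer exactly). But this gadget is unnecessary: Lemma~\ref{add-neq} is a two-way equivalence, and appending to $x+y+z=0$ a coordinate $z'$ complementary to $z$ produces (after reordering) exactly the relation ``$x+y+z'=1$ with one complementary coordinate appended.'' So two applications of Lemma~\ref{add-neq} convert $\prob{\{[x+y+z=0],\ne\}}$ into $\prob{\{[x+y+z=1],\ne\}}$; equivalently, one can climb from any admissible $R_J$ up to $R_6$ and back down to $R_{\{1,2,3\}}$. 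That is what the paper means by ``all other cases are very similar'' and why its proof is two sentences plus citations, where yours needs a page of weight/distance accounting.

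There is also one step that fails as literally stated: the claim that for non-minimal $J$ ``the same manipulations go through'' to produce $=$ first. If $J$ contains \emph{two} full pairs, say $\{1,4\},\{2,5\}\subseteq J$, then binary equality cannot be obtained at all as the projection onto the full coordinates of the solution set of any $R_J$-formula. Indeed, a free variable cannot occupy position $1$, $2$, $4$, or $5$ of any constraint, since the variable in the paired position would then also be free and complementary to it, and two free coordinates projecting to $=$ cannot be complementary; hence free variables occur only in position $3$ (or $6$), where the equation $v_1+v_2+x=1$ with $v_1,v_2$ forced makes them forced as well. So your chain ``$=$, then $C_0,C_1$, then $\ne$'' breaks at its first link for such $J$ (and note the diagonal $R_J(w,\dots,w)$ is outright unsatisfiable once a full pair is present). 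The fix stays inside your framework: derive $\ne$ directly instead --- for example, for $J=\{1,2,3,4,5\}$ the single constraint $R_J(x,x,v,y,y)$ forces $v=1$ and $y=1-x$, so its projection onto $\{x,y\}$ is $\ne$ --- then Lemma~\ref{add-const} (which accepts $\ne$ in place of $=$) supplies the constants, and Lemma~\ref{add-neq} strips the redundant coordinates as you intended.
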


\begin{proof}
Note that the relation $R_6$ does not change if the first equation in the system is
replaced by $x_4+x_2+x_3=0$. We will assume that $1,2,3\in J$. All other cases are
very similar. As in the (beginning of the) proof of Lemma~\ref{lem:affine3}, we can
argue that $\prob{\{R_J\}}$ is polynomial-time equivalent to $\prob{\{R_J,\ne\}}$
Now the lemma follows from Lemmas~\ref{lem:affine3} and~\ref{add-neq}.
\qed\end{proof}

\begin{lemma}\label{affine-hard}
$\prob{\{R\}}$ is \NP-hard if $R$ is affine, but not of width 2.
\end{lemma}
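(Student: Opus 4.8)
The plan is to prove NP-hardness by giving a weight- and distance-faithful reduction from $\prob{\{R_J\}}$ for a suitable three-element index set $J$, i.e.\ from the ternary parity relation $x+y+z=b''$, which is NP-hard by Lemma~\ref{base-aff-rel}. The essential constraint is that, unlike in the classical Schaefer setting, the only admissible operations are those licensed by Lemmas~\ref{add-rel}, \ref{add-const} and~\ref{add-neq}: identifying coordinates, deleting a coordinate that is \emph{forced} to a constant (Lemma~\ref{add-rel}), introducing the constants $C_0,C_1$ once $=$ or $\ne$ belongs to the signature (Lemma~\ref{add-const}), and inserting or deleting a negated duplicate coordinate once $\ne$ is available (Lemma~\ref{add-neq}). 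So the whole task is to \emph{implement} a ternary parity relation inside $R$ using exactly these moves.

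I would start from a dual-code analysis. By Lemma~\ref{add-rel} we may assume $R$ has no fixed coordinate, and write $R=c+L$ with $L\subseteq\mathrm{GF}(2)^n$ a linear subspace. A convenient reformulation is that such an $R$ is width-$2$ affine precisely when $L^{\perp}$ is spanned by its weight-$2$ vectors (a weight-$1$ dual would fix a coordinate, which we have excluded, and a weight-$2$ dual is exactly a constraint $x=y$ or $x\neq y$). Since $R$ is affine but not width-$2$, $L^{\perp}$ is strictly larger than the span $W$ of its vectors of weight at most $2$. I would choose $w\in L^{\perp}\setminus W$ of \emph{globally minimum} weight and set $S=\mathrm{supp}(w)$, $s=|S|\ge 3$. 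A short exchange argument then shows that $w$ is the \emph{only} nonzero dual vector supported inside $S$: any other such $u$ has $\mathrm{supp}(u)\subseteq S$, so $w+u$ is again supported in $S$ with $\mathrm{wt}(w+u)=s-\mathrm{wt}(u)<s$; if $u\in W$ then $w+u\notin W$ contradicts the minimality of $w$, and if $u\notin W$ then minimality forces $\mathrm{wt}(u)\ge s=|S|$, whence $u=\mathbf{1}_S=w$. Thus the single equation $\sum_{i\in S}x_i=b$ sits inside $R$ in isolation, not entangled with any low-weight equation on $S$.

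With this in hand I would first secure $\ne$ and the constants. If $L^{\perp}$ already contains a weight-$2$ vector, projecting onto its two coordinates yields $=$ or $\ne$ at once; otherwise I would manufacture one from the isolated parity by identifying a suitable (odd) number of its support-coordinates onto a single variable, which, over $\mathrm{GF}(2)$, collapses $\sum_{i\in S}x_i=b$ to a two-variable relation $x+y=b'$. Either way Lemma~\ref{add-const} then supplies $C_0,C_1$ and Lemma~\ref{add-neq} becomes applicable. I would then lower the arity of the isolated parity from $s$ to $3$: substituting a constant into one support-coordinate drops the arity by one and flips the right-hand side, identifying two support-coordinates drops it by two, and Lemma~\ref{add-neq} lets me insert or delete negated duplicate coordinates to match the exact shape of the target $R_J$. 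Splitting into the odd-$s$ and even-$s$ cases (the latter passing through an arity-parity change via one constant substitution), this reaches a ternary parity $x+y+z=b''$, and Lemma~\ref{base-aff-rel} finishes the proof.

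The main obstacle is precisely the \emph{faithfulness} of this implementation. I am not permitted to existentially project away a genuine (non-fixed) variable, because such a variable contributes to both the Hamming weight and the distance $k$, so an ordinary pp-definition that quantifies over real variables need not preserve the local-search instance. Every auxiliary variable I introduce must therefore be either forced to a constant (and hence deleted harmlessly by Lemma~\ref{add-rel}) or be a negated duplicate absorbed by Lemma~\ref{add-neq}. The delicate point is that the isolated parity is a \emph{long} equation spread over $S$, so localising it onto three surviving coordinates by fixing and identifying the remaining ones must be arranged so as not to introduce spurious equations or collapse the relation to a width-$2$ one; here the isolation property of $w$ proved above is exactly the lever that prevents such a collapse. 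Handling the constant-free subcase, where $L^{\perp}$ contains no weight-$2$ vector at all and the constants have to be forced internally from the parity structure itself, is the remaining piece of genuine work.
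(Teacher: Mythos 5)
Your isolation lemma (a minimum-weight dual vector $w\in L^{\perp}\setminus W$ is the unique nonzero dual vector supported inside $S=\mathrm{supp}(w)$) is correct and nicely proved, but the two steps that actually carry the reduction have genuine gaps, and they sit exactly where the paper has to do its real work. First, you never legally obtain $\ne$ (and hence never obtain the constants). ``Projecting onto the two coordinates'' of a weight-2 dual vector is not a move the framework permits: Lemma~\ref{add-rel} only deletes coordinates that are \emph{forced} to a constant, so you cannot project away the other $n-2$ free coordinates --- a restriction you yourself state one paragraph later. The fallback of identifying support coordinates of $w$ does not help either: the resulting relation is not a two-variable relation $x+y=b'$; it still carries every coordinate outside $S$, and those are in general not forced, so they cannot be stripped by Lemma~\ref{add-rel}. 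Consequently $\ne$ never enters the language, Lemma~\ref{add-const} and Lemma~\ref{add-neq} never become applicable, and the constant substitutions your later steps rely on are unavailable (circularly, they were supposed to come from this step). The paper gets the constants by a different device: a case analysis on 0-/1-validity in which \emph{every} coordinate of $R$ is identified onto one of just two variables, guided by a particular tuple of $R$, so that no free leftover coordinates exist; affineness then forces the resulting binary relation to be $=$, $\ne$, or a singleton, and constants follow.

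Second, and more fundamentally, isolation controls only dual vectors supported \emph{inside} $S$; it says nothing about coordinates outside $S$ or about dual vectors crossing $S$, and these are what can destroy the localization. Take $R$ on six variables defined by $x_1+x_2+x_3=1$, $x_1+x_4+x_5=0$, $x_2+x_4+x_6=0$. No dual vector has weight at most $2$, and $S=\{1,2,3\}$ is isolated in your sense; yet your prescribed moves cannot produce a ternary parity: every substitution of constants into $x_4,x_5,x_6$ collapses the relation on $S$ to a single point (the crossing duals project onto $S$ as the weight-one vectors $(1,0,0)$ and $(0,1,0)$, so $x_1,x_2$, hence $x_3$, become forced), while identifications inside $S$ merely force constants. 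A ternary parity \emph{is} reachable here, but only via identifications among and across the outside coordinates (for instance $x_5$ with $x_6$, then $x_1$ with $x_2$), chosen according to the crossing dual vectors; your plan contains no mechanism for finding such identifications or proving they always exist, and your claim that isolation is ``exactly the lever that prevents such a collapse'' is refuted by this example. This is precisely the job done by the paper's minimization: it takes $R'$ of \emph{minimum arity} among all relations obtainable from $R$ by constant substitution and identification that are still not width-2, and shows through its binary projections that minimality forces every non-core coordinate of $R'$ to be a negated copy of a core coordinate, i.e., $R'$ is literally one of the relations $R_J$ of Lemma~\ref{base-aff-rel}. Your minimum-weight dual vector is not an adequate substitute for that minimization, so the proposal as it stands does not prove the lemma.
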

\begin{proof}
  First, we show that it is enough to prove the result for
  $\Gamma=\{R,C_0,C_1\}$, that is, that \prob{\Gamma} and \prob{\{R\}} are
  polynomial-time equivalent. If $R$ is neither 0- nor 1-valid, then
  fix a tuple in $R$ that contains both 0 and 1, say $(0\zd 0,1\zd
  1)$, and consider the relation $R'(x,y)=R(x\zd x,y\zd y)$, where
  positions of the $x$'s correspond to the 0's in the tuple. It is clear
  that either $R_1=\{(0,1)\}$ in which case we are done, or $R_1$ is
  the disequality relation, and then we are done by
  Lemma~\ref{add-const}. If $R$ is 0-valid, but not 1-valid, then take
  a tuple $\ov{a}\in R$ that contains a minimal positive number of
  1's, say $(0\zd 0,1\zd 1)$. The fact that $R$ is affine and 0-valid
  implies that it can be defined by a system of linear equations of
  the form $\sum{x_i}=0$. It follows that the number of 1's in
  $\ov{a}$ is at least two. Consider the instance $R(z\zd z,x,y\zd y),R(z\zd z)$
  where the $z$'s in the first constraint correspond to the 0's in $\ov{a}$.
  by the choice of $\ov{a}$, the projection on $x,y$ of the set of solutions of this
  instance is the equality relation, while the projection on $z$ is $C_0$.
  Now we can use Lemmas~\ref{add-rel} and~\ref{add-const}.
  The argument is very similar if $R$ is 1-valid, but not 0-valid. If $R$ is both 0- and
  1-valid then take any tuple, say $(0\zd 0,1\zd 1)\not\in R$ and
  consider the relation $R'(x,y)=R(x\zd x,y\zd y)$, where positions of
  the $x$'s correspond to 0's in the tuple. Since $R'$ is affine, it
  can only be the equality relation, and we are done again.

Now let $R'$ be a minimium arity relation which can be obtained from $R$ by
substituting constants and identifying variables (note that such a relation is also
affine) and which is not of width-2. We claim that $R'$, possibly after permuting
variables, is a relation of the form $R_J$, as in Lemma~\ref{base-aff-rel}. By
Lemma~\ref{add-rel}, establishing this claim would finish the proof of the present
lemma.

Note that by the minimality of (the arity of) $R'$, none of the
projections of $R'$ onto a single coordinate can be one-element, and
none of the projections of $R'$ onto a pair of coordinates can be the
equality relation.  Thus every binary projection is either the
disequality relation or $\{0,1\}^2$ (note that the binary projection
cannot contain 3 tuples, since $R'$ is affine).  Furthermore, if two different binary
projections of $R'$ are disequality relations then the corresponding
pairs of coordinates are disjoint. Let $R''$ be a
largest arity projection of $R'$ such that every binary projection of
$R''$ is $\{0,1\}^2$. Note that, to prove the lemma, it is sufficient
to show that $R''$ is ternary and can be described by a single
equation $x_1+x_2+x_3=a$, $a\in\{0,1\}$. It is easy to see that then a
relation of the form $R_J$ can obtained from $R'$ by permuting
coordinates: every variable of $R$ not in $\{x_1,x_2,x_3\}$ forms a
disequality relation with one of $x_1,x_2,x_3$.

Consider the relations $R''_0=R''(x_1\zd x_n,0)$ and $R''_1=R''(x_1\zd x_n,1)$. By
the choice of $R'$, both of these relations are of width-2 affine, that is each of
them can be expressed by a system of equations of the form $x_i+x_j=a$,
$a\in\{0,1\}$. Note that it is impossible that, for a pair of coordinates $i,j$,
exactly one of the projections of $R''_0$ and $R''_1$ onto $i,j$ is the full
relation $\{0,1\}^2$. Indeed, this would imply that the size of the projection of
$R''$ onto $\{i,j\}$ would not be a power of 2. Therefore, if, for a pair of indices
$i,j$, one of the projections of $R''_0$ and $R''_1$ onto $i,j$ is the equality
relation then the other must be the disequality relation (to ensure that the
projection of $R''$ onto $\{i,j\}$ is $\{0,1\}^2$). It follows that $R''$ can be
described by the following system of equations: for each pair $i,j$ such that
$\proj_{i,j}(R''_0)$ is the equality relation the system contains the equation
$x_i+x_j+x_{n+1}=0$, and for each pair $i,j$ such that $\proj_{i,j}(R''_0)$ is the
disequality relation the system contains the equation $x_i+x_j+x_{n+1}=1$, and there
are no other equations in the system.

Note that if some $x_i$ with $1\le i\le n$ participates in at least two equations in
this system (which also involve $x_j$ and $x_{j'}$) then the projection of $R''$
onto $j,j'$ would not be $\{0,1\}^2$, which is a contradiction. Hence, the only
variable that may appear in more than one equation is $x_{n+1}$. Assume that the
system contains at least two  equations, say, $x_1+x_2+x_{n+1}=a$ and
$x_3+x_4+x_{n+1}=b$. Then,  by identifying $x_4$ with $x_{n+1}$ in $R'$, we would be
able to obtain a relation which is affine but not width-2 (because of the equation
$x_1+x_2+x_{n+1}=a$), and has arity smaller than the arity of $R'$, which is a
contradiction.

The lemma is proved.
\qed\end{proof}

It remains to consider the case when $\Gamma$ contains a relation that is flip
separable, but not affine. Again, we first consider one particular relation.

\begin{lemma}\label{lem:1in3}
$\prob{\{R_{\textsc{1-in-3}}\}}$ is \NP-hard.
\end{lemma}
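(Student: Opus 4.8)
The plan is to mirror the reduction of Lemma~\ref{lem:affine3}, reducing from $\CSP{R_{\textsc{1-in-3}}}$ restricted to instances in which every variable occurs in exactly three constraints and every constraint has three distinct variables; as used in Lemma~\ref{lem:affine3} (see \cite{Moore01:tiling}), this restriction is $\NP$-hard and forces the number of constraints to equal the number $n$ of variables and every satisfying assignment to have weight exactly $n/3$. First I would make the constants and $\ne$ available. Applying Lemma~\ref{add-rel} to the formula $R_{\textsc{1-in-3}}(x,y,z)\wedge R_{\textsc{1-in-3}}(z,z,w)$ does this: the second constraint forces $z=0$, $w=1$, so the solution set projects onto $x,y$ exactly as the disequality relation, whence $\prob{\{R_{\textsc{1-in-3}}\}}$ is polynomial-time equivalent to $\prob{\{R_{\textsc{1-in-3}},\ne\}}$; Lemma~\ref{add-const} then lets me also assume $C_0,C_1\in\Gamma$. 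Thus I may freely use $0$, $1$ and $\ne$ in the construction.

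The heart of the construction is a gadget replacing the two affine equations $x_a+x_b+v_t=1$, $x_c+v_t+y_0=1$ of Lemma~\ref{lem:affine3}. Those enforce $x_a+x_b+x_c\equiv y_0 \pmod 2$, but in the case $x_c=y_0=1$ they use the tuple $(1,1,1)$, which is precisely the pattern absent from $R_{\textsc{1-in-3}}$; so the affine gadget cannot be replayed. Instead I would realize the relation $S(x_a,x_b,x_c,y_0)$ that holds iff $x_a+x_b+x_c=y_0$ \emph{as integers} (equivalently, $y_0=1$ forces exactly one of $x_a,x_b,x_c$ to be $1$, while $y_0=0$ forces all three to $0$). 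The key observation is that $S(x_a,x_b,x_c,y_0)$ is just $R_{\textsc{1-in-4}}(x_a,x_b,x_c,\bar y_0)$ with $\bar y_0\ne y_0$, and that $R_{\textsc{1-in-4}}$ is expressible from $R_{\textsc{1-in-3}}$ and $\ne$: indeed $R_{\textsc{1-in-4}}(a,b,c,d)$ is equivalent to $R_{\textsc{1-in-3}}(a,b,t)\wedge R_{\textsc{1-in-3}}(c,d,t')\wedge t\ne t'$, since the first two constraints force $a+b\le 1$, $c+d\le 1$, $t=1-a-b$, $t'=1-c-d$, and $t\ne t'$ then says that exactly one of the two pairs carries the single $1$. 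Crucially, this gadget never imposes a $(1,1,1)$ pattern on any $R_{\textsc{1-in-3}}$-constraint.

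With $S$ available the reduction copies Lemma~\ref{lem:affine3}: I keep the $2n$ tied copies $x_{i,1}\zd x_{i,2n}$ of each variable, the block $y_1\zd y_{2n^2}$ tied by equality together with a single toggle $y_0$ satisfying $y_0\ne y_1$, and for the $t$-th constraint on $x_a,x_b,x_c$ I add, for each $j$, the gadget $S(x_{a,j},x_{b,j},x_{c,j},y_0)$. The initial assignment $f$ sets $y_j=1$ for $j\ge1$ and every other original variable to $0$ (with the forced values on the gadget's auxiliary variables $\bar y_0,t,t'$); since $S(0,0,0,0)$ holds, $f$ is satisfying regardless of whether $\phi$ is satisfiable, which is exactly why the toggle is needed. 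As before, the toggle is the weight-gap engine: while $y_0=0$ each gadget forces its $x$-variables to $0$, so the only way to reduce $w(f)$ within distance $k$ is to flip $y_0$ to $1$ and switch the whole $y$-block off; this activates every gadget into a genuine $R_{\textsc{1-in-3}}(x_{a,j},x_{b,j},x_{c,j})$ constraint and forces the surviving $x$-variables to encode a weight-$n/3$ satisfying assignment of $\phi$.

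The main obstacle is precisely the gadget just described: extracting a toggle-gated ``exactly one of three'' from a relation that itself forbids the all-ones tuple, which I resolve by recognizing $S$ as a $1$-in-$4$ constraint built from $R_{\textsc{1-in-3}}$ and $\ne$. Once $S$ is in place the construction is essentially that of Lemma~\ref{lem:affine3}, and the remaining work is bookkeeping: choosing the copy-multiplicities ($2n$ and $2n^2$) and a bound $k=O(n^2)$ so that (i) the assignment $f'$ arising from a solution of $\phi$ is strictly lighter than $f$ and within distance $k$, and (ii) conversely every lighter solution within distance $k$ must flip the toggle and hence, by the weight count, restrict to a genuine $n/3$-weight solution of $\phi$. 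I expect these estimates to be routine but somewhat tedious, the only new wrinkle being that the auxiliary variables $t,t'$ of each gadget also contribute to $\dist(f,f')$ and must be absorbed into $k$.
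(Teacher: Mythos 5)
Your proof is correct in substance but takes a genuinely different route from the paper's. The paper obtains $\ne$ exactly as you do (the same two-constraint formula plus Lemma~\ref{add-rel}), but then reduces from the local-search problem \prob{\{x\vee y\}} of Lemma~\ref{lem:orhard}: each variable $v$ of the $(x\vee y)$-formula is represented by a ``negated'' variable $x^0_v$ tied by $\ne$ to a padded block of copies (of size $S-2m$ or $S$ according to $f_1(v)$, with $S=10n^2m^2$), each constraint $u\vee v$ becomes $R_{\textsc{1-in-3}}(x^0_u,x^0_v,y_c)$, and hardness follows from careful weight/distance estimates. You instead reduce from occurrence-bounded $\CSP{R_{\textsc{1-in-3}}}$ by transplanting the toggle construction of Lemma~\ref{lem:affine3}; your genuinely new ingredient, the gated constraint $S$ realized as a $1$-in-$4$ constraint built from two $R_{\textsc{1-in-3}}$ constraints joined by $\ne$, is correct (both directions of the equivalence check out, and the auxiliaries $t,t'$ are uniquely determined by the other variables, so they affect only the distance bookkeeping you already flagged). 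Your backward direction is in fact cleaner than the paper's: with the toggle at $0$ every gadget forces its $x$-variables to $0$, so the initial assignment is the \emph{only} solution on that side, and any lighter solution must flip the toggle and therefore encodes a satisfying assignment of the source instance; the distance bound is needed only in the forward direction. What the paper's route buys is reuse of \prob{\{x\vee y\}}, whose hardness it needs anyway for the W[1]-hardness results; what yours buys is independence from Lemma~\ref{lem:orhard} and lighter arithmetic. One detail needs repair: binary equality is not actually in your language, and Lemma~\ref{add-rel} cannot yield it as a binary relation (any formula over $R_{\textsc{1-in-3}}$, $\ne$, $C_0$, $C_1$ that ties two free coordinates together does so through further free coordinates, which then also survive the projection), so your blocks ``tied by equality'' must instead be tied as the paper ties them --- by $\ne$ to a common hub variable, or by a ternary relation such as $\{(a,a,1-a)\}$ obtained from Lemma~\ref{add-rel} --- which only adds forced auxiliary variables of the kind your $k=O(n^2)$ budget already absorbs.
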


\begin{proof}
First, we show that we can assume that the relation $\ne$ is available.
Consider the instance
$R_{\textsc{1-in-3}}(x,y,z_0),R_{\textsc{1-in-3}}(z_0,z_0,z_1)$.
It is easy to see that the projection on $x,y$ of the set of all
solutions of this instance is $\neq$, and Lemma~\ref{add-rel}
implies that it is now sufficient to prove that
$\prob{\{R_{\textsc{1-in-3}},\neq \}}$ is \NP-hard.

The proof is by reduction from \prob{\{x\vee y\}}. Let $(\phi_1,f_1,k)$ be an
instance of \prob{\{x\vee y\}} with $n$ variables and $m$ constraints. Set
$S:=10n^2m^2$. We construct a formula $\phi_2$ with the following variables:
\begin{itemize}
\item For each variable $v$ of $\phi_1$ with $f(v)=1$, there is a variable $x^0_v$ and $S-2m$ variables
  $x^1_{v,1}$, $\dots$, $x^1_{v,S-2m}$.
\item For each variable $v$ of $\phi_1$ with $f(v)=0$, there is a variable $x^0_v$ and $S$ variables
  $x^1_{v,1}$, $\dots$, $x^1_{v,S}$.
\item For each constraint $c$ of $\phi_1$, there is a variable $y_c$.
\end{itemize}
The constraints of $\phi_2$ are as follows:
\begin{itemize}
\item For every variable $v$ of $\phi_1$ with $f(v)=1$, there is a constraint
  $x^0_v\neq x^1_{v,i}$ ($1 \le i \le S-2m$).
\item For every variable $v$ of $\phi_1$ with $f(v)=0$, there is a constraint
  $x^0_v\neq x^1_{v,i}$ ($1 \le i \le S$).
\item For every constraint $c=(u\vee v)$ of $\phi_1$, there is a
  constraint $R_{\textsc{1-in-3}}(x^0_u,x^0_v,y_c)$.
\end{itemize}

We observe that the satisfying assignments of $\phi_2$ are in one-to-one
correspondence with the satisfying assignments of $\phi_1$. First, since $x^0_v\neq
x^1_{v,i}$ for every $i$, we have that $x^1_{v,i_1}=x^1_{v,i_2}$ for every
$i_1,i_2$. Given a satisfying assignment $f$ of $\phi_2$, we can obtain an
assignment of $\phi_1$ by setting $f(v)=f(x^1_{v,1})$. Observe that this gives a
satisfying assignment of $\phi_1$: if $c=(u \vee v)$ is a constraint of $\phi_1$,
then $f(u)=f(v)=0$ would imply $f(x^0_{u})=f(x^0_{v})=1$, which would mean that the
constraint $R_{\textsc{1-in-3}}(x^0_u,x^0_v,y_c)$ is not satisfied.
 On the other hand, if $f$ satisfies $\phi_1$, then setting
 $f(x^1_{v,1})=f(v)$ has a unique satisfying extension to 
 $\phi_2$.

Let $f_2$ be the assignment of $\phi_2$ corresponding to assignment $f_1$ of
$\phi_1$. Set $k':=k(S+1)+m$.  We claim that $(\phi_2,f_2,k')$ is a yes-instance if
and only if $(\phi_1,f_1,k)$ is a yes-instance.

Let $f'_1$ be an arbitrary satisfying
 assignment of $\phi_1$ and let $f'_2$ be the corresponding assignment
 of $\phi_2$. Let $B$ (resp., $B'$) contain variable $v$ of
 $\phi_1$ if and only if $f_1(v)=1$ (resp., $f'_1(v)=1$).  Assignments
 $f_2$ and $f'_2$ differ on every variable $x^0_v$, $x^1_{v,i}$ with
 $v\in |B\setminus B'|\cup |B'\setminus B|$, and possibly on some of
 the variables $y_c$.  Counting the number of 0's and 1's on these
 variables, it follows that the weight of $f'_2$ is at least
\begin{equation}\label{eq:low}
w(f_2)+|B'\setminus B|(S-1)-|B\setminus B'|(S-2m-1)-m
\end{equation}
and at most
\begin{equation}\label{eq:up}
w(f_2)+|B'\setminus B|(S-1)-|B\setminus B'|(S-2m-1)+m,
\end{equation}
where the $\pm m$ term accounts for the variables of the form $y_c$.

To prove that $(\phi_2,f_2,k')$ is a yes-instance if and only if
$(\phi_1,f_1,k)$ is a yes-instance, first we assume that
$(\phi_1,f_1,k)$ has a solution $f'_1$; let $f'_2$ be the
corresponding assignment of $\phi_2$. Define $B$ and $B'$ as above.
It is easy to see that the distance of $f_2$ and $f'_2$ is at most
$|B\setminus B'|(S-2m+1)+|B'\setminus B|(S+1)+m\le k(S+1)+m=k'$. From
$|B'|<|B|$ we have $|B'\setminus B|<|B\setminus B'|$. Thus by
\eqref{eq:up}, the weight of $f'_2$ is at most
\begin{multline*}
w(f_2)+|B'\setminus B|(S-1)-|B\setminus B'|(S-2m-1)+m\\\le w(f_2)+|B'\setminus
B|(S-1)-(|B'\setminus B|+1)(S-2m-1)+m\\
= w(f_2)+|B'\setminus B|2m+3m+1-S <w(f_2),
\end{multline*}
since $S$ is sufficiently large.

For the second direction, suppose that there is a solution $f'_2$ for instance
$(\phi_2,f_2,k')$. Let $f'_1$ be the corresponding satisfying assignment, and define
$B$ and $B'$ as before. First we show that the distance of $f'_1$ and $f_1$ is at
most $k$, i.e., $|B'\setminus B|+|B\setminus B'|\le k$. Assignments $f_2$ and $f'_2$
differ on every vertex of $x^1_{v,i}$ for $v\in |B'\setminus B|+|B\setminus B'|$ and
$1 \le i \le S-2m$. Thus $|B'\setminus B|+|B\setminus B'|>k$ would imply that the
distance of $f_2$ and $f'_2$ is at least $(k+1)(S-2m)= k(S+1)+S-k-2m> k(S+1)+m=k'$,
since $S$ is sufficiently large. Let us show that $|B'\setminus B|<|B\setminus B'|$.
Suppose first that $|B\setminus B'|=0$. By \eqref{eq:low}, the weight of $f'_2$ is
at least
\[
w(f_2)+|B'\setminus B|(S-1)-|B\setminus B'|(S-2m-1)-m\ge w(f_2)+(S-1)-m\ge w(f_2),
\]
contradicting the assumption $w(f'_2)<w(f_2)$. Suppose now that $|B'\setminus B|\ge
|B\setminus B'| \ge 1$. Now the weight of $f'_2$ is at least
\begin{multline*}
  w(f_2)+|B'\setminus B|(S-1)-|B\setminus B'|(S-2m-1)-m \\\ge w(f_2)+ |B\setminus
B'|(S-1)-|B\setminus B'|(S-2m-1)-m = w(f_2) + |B\setminus B'|2m-m >w(f_2),
\end{multline*}
again a contradiction.
\qed\end{proof}

Consider the 6-ary relation $T_6$ defined as follows: $(a_1\zd
a_6)\in T_6$ if and only if $(a_1,a_2,a_3)\in R_{\textsc{1-in-3}}$
and $a_i\ne a_{i+3} (1\le i\le 3)$. Let $J$ be a subset of $\{1\zd
6\}$ and let $T_J$ denote the projection of $T_6$ onto the set of
indices $J$. Similarly to Lemma~\ref{base-aff-rel}, we have:

\begin{lemma}\label{base-flip-rel}
$\prob{\{T_J\}}$ is \NP-hard if $J$ contains at least one number
from each pair $i,i+3$ where $i=1,2,3$.
\end{lemma}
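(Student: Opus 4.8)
The plan is to follow the same route as the proof of Lemma~\ref{base-aff-rel}: first make the disequality relation $\ne$ available, then use Lemma~\ref{add-neq} to strip $T_J$ down to the plain relation $R_{\textsc{1-in-3}}$, and finally invoke the hardness of $\prob{\{R_{\textsc{1-in-3}},\ne\}}$ that is established inside the proof of Lemma~\ref{lem:1in3}. Throughout I would use that, by definition of $T_6$, each coordinate $i+3$ is the complement of coordinate $i$; hence the three tuples of $T_J$ are in bijection with those of $R_{\textsc{1-in-3}}$, and every coordinate of $T_J$ equals $u_i$ or $1-u_i$ for the corresponding ``slot'' $i\in\{1,2,3\}$.

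\textbf{Making $\ne$ available.} Mimicking the opening of the proof of Lemma~\ref{lem:1in3}, I would take the gadget built from the two constraints $T_J(x,y,z_0,\dots)$ and $T_J(z_0,z_0,z_1,\dots)$, filling the remaining (complementary) coordinates with fresh variables and using complemented literals wherever $J$ selects a coordinate $i+3$ rather than $i$. The second constraint forces $z_0=0$ and $z_1=1$ exactly as before, and therefore forces every fresh variable lying in a complementary coordinate of that constraint to a constant as well; the first constraint then forces $x\ne y$. Applying Lemma~\ref{add-rel} to the solution set of this gadget yields a relation $R'$ whose free coordinates are $x$, $y$ together with the complementary variables $\bar x,\bar y$ coming from the first constraint. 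The one subtlety, which I expect to be the \emph{main obstacle}, is that these complementary coordinates are not constant, so $R'$ is not literally $\ne$ but rather $\{(a,1-a,1-a,a):a\in\{0,1\}\}$. I would resolve this by observing that in $R'$ the last two coordinates are copies of the first two (since $1-\bar x=x$), so $\ne$ is definable from $R'$ by the pure identification $R'(x,y,y,x)$, which introduces no auxiliary variables and hence has no effect on weights or distances. This shows $\prob{\{T_J\}}$ and $\prob{\{T_J,\ne\}}$ are polynomial-time equivalent.

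\textbf{Normalising $T_J$ and concluding.} Once $\ne$ is available, each coordinate $i+3$ occurring in $J$ is the complement of the partner coordinate it is paired with, so by Lemma~\ref{add-neq} (after permuting coordinates to put the relevant one last) such a coordinate can be deleted whenever its partner $i$ is also present; and if $J$ contains only $i+3$, I would first \emph{add} the coordinate $i$, which is the complement of $i+3$, and then delete $i+3$. Repeating this normalisation for all three slots turns $T_J$ into $T_{\{1,2,3\}}=R_{\textsc{1-in-3}}$, whence $\prob{\{T_J,\ne\}}$ is polynomial-time equivalent to $\prob{\{R_{\textsc{1-in-3}},\ne\}}$. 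The latter is $\NP$-hard by the proof of Lemma~\ref{lem:1in3}, so $\prob{\{T_J\}}$ is $\NP$-hard, completing the argument.
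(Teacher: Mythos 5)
Your high-level route is the one the paper intends: the paper proves Lemma~\ref{base-flip-rel} only by analogy (``Similarly to Lemma~\ref{base-aff-rel}\dots''), and the analogous argument is exactly your three steps --- make $\ne$ available, shuttle between $T_J$ and $R_{\textsc{1-in-3}}$ via Lemma~\ref{add-neq}, and invoke the \NP-hardness of \prob{\{R_{\textsc{1-in-3}},\ne\}} established inside the proof of Lemma~\ref{lem:1in3}. Your second and third steps are sound: permuting coordinates is harmless, the add-then-delete use of Lemma~\ref{add-neq} for a slot present only in complemented form is legitimate once $\ne$ is in the language, and Lemma~\ref{lem:1in3} does give the required hardness.

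The genuine gap is in your first step, in exactly the cases $\{1,2,3\}\not\subseteq J$. ``Complemented literals'' do not exist in this framework: a constraint is a relation applied to a tuple of variables, and any device tying a fresh variable to the complement of $z_0$ would need the very relation $\ne$ you are trying to construct. If you instead put an unconstrained fresh variable in such a position, the forcing collapses. Concretely, for $J=\{2,3,4\}$ your second constraint becomes ``exactly one of $1-f$, $z_0$, $z_1$ equals $1$'' for a fresh variable $f$; its solution set is $(f,z_0,z_1)\in\{(0,0,0),(1,1,0),(1,0,1)\}$, so $z_0$ is forced to no constant, and the first constraint (in which $x$ cannot even be placed) no longer yields $x\ne y$. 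Even in the case $J=\{4,5,6\}$, where $T_J=R_{\textsc{2-in-3}}$ and the shape of your gadget does survive, the claim that the second constraint forces $z_0=0$, $z_1=1$ ``exactly as before'' is false: there it forces $z_0=1$, $z_1=0$. These cases are not peripheral --- the relations actually used later in Lemma~\ref{affine-flip-hard} are $T_{\{2,3,4\}}$ and $T_{\{3,4,5\}}$, both missing some of $\{1,2,3\}$. The gap is repairable, but by case analysis rather than by your uniform gadget: for instance, for $T_{\{2,3,4\}}$ (tuples $(0,0,0),(1,0,1),(0,1,1)$) identifying the first two coordinates forces the all-zero tuple; a second constraint sharing the forced-zero variable then yields the equality relation via Lemma~\ref{add-rel}, Lemma~\ref{add-const} supplies $C_0$ and $C_1$, and the instance consisting of $C_1(u)$ and $T_{\{2,3,4\}}(x,y,u)$ finally yields $\ne$.
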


\begin{lemma}\label{affine-flip-hard}
$\prob{\{R\}}$ is
\NP-hard if $R$ is flip separable but not affine.
\end{lemma}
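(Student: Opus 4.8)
The proof follows the same template as Lemma~\ref{affine-hard}, with $R_{\textsc{1-in-3}}$ and the relations $T_J$ of Lemma~\ref{base-flip-rel} playing the roles that the equation $x+y+z=1$ and the relations $R_J$ played in the affine case. The plan is to reduce $\prob{\{T_J\}}$, for a suitable $J$, to $\prob{\{R\}}$ by exhibiting, up to permutation of coordinates, some $T_J$ as a relation obtainable from $R$ by substituting constants and identifying variables, and then invoking Lemma~\ref{add-rel} together with Lemma~\ref{base-flip-rel}.

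First I would show that we may assume the constants $C_0,C_1$ are available, i.e.\ that $\prob{\{R\}}$ and $\prob{\{R,C_0,C_1\}}$ are polynomial-time equivalent. When $R$ is neither $0$- nor $1$-valid this is the generic argument from the opening of Lemma~\ref{affine-hard}: substituting a single tuple of $R$ containing both values into $R'(x,y)=R(x\zd x,y\zd y)$ yields either $\{(0,1)\}$ or the disequality relation, and in the latter case Lemma~\ref{add-const} finishes. The point where flip separability must replace affineness is the valid case: if $R$ is $0$-valid but not $1$-valid, then identifying all coordinates gives $R(x,\dots,x)=\{0\}=C_0$, since $(1,\dots,1)\notin R$; substituting this constant then exposes the equality relation (here non-affineness and flip separability are used), and Lemma~\ref{add-const} supplies $C_1$. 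The $1$-valid case is symmetric, and a short argument shows that a relation that is simultaneously $0$-valid, $1$-valid and flip separable is forced to be affine, so the both-valid subcase cannot occur here.

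With the constants in hand I would pass to a relation $R'$ of minimum arity obtainable from $R$ by substituting constants and identifying variables that is still flip separable but not affine (such an $R'$ exists, as $R$ itself qualifies); by Lemma~\ref{add-rel} it suffices to prove that $\prob{\{R'\}}$ is \NP-hard. Minimality forces every unary projection of $R'$ to be $\{0,1\}$ and no two coordinates to be related by equality. Two coordinates may still be forced disequal, but, since equality pairs are excluded, the disequality relation is a partial matching on the coordinates; collapsing each such pair to a single representative leaves a ``core'' relation $R^\ast$ with full unary projections and no equality or disequality pair, so that $R'$, up to permutation and complementation of coordinates, is determined by $R^\ast$ exactly as $T_J$ is determined by $R_{\textsc{1-in-3}}$. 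Hence it suffices to prove that $R^\ast$, up to permutation and complementation of its coordinates, is $R_{\textsc{1-in-3}}$.

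The main obstacle is precisely this last classification step. Each binary projection of $R^\ast$ is now either $\{0,1\}^2$ or a three-tuple relation missing one corner, so, unlike the affine case where all binary projections were full and the analysis collapsed to a single linear equation, I cannot simply read off a defining system. Instead I would fix a witness $a,b,c\in R^\ast$ to non-affineness with $a\oplus b\oplus c\notin R^\ast$, translate by complementing coordinates (which preserves flip separability and only moves us within the $T_J$ family, hence is harmless) so that $a=\mathbf 0$, and then analyze the family of supports of $R^\ast$ using flip separability at several tuples, not just at $\mathbf 0$. Flip separability is exactly what rules out the near-misses: for example it excludes $\neg x\vee\neg y=\{(0,0),(0,1),(1,0)\}$, whose tuple $(0,1)$ has nested flip sets $\{2\}\subset\{1,2\}$ whose difference $\{1\}$ is not a flip set, and more generally it forces the non-affine witness to sit inside a $1$-in-$3$ pattern rather than a weaker ``not-both'' pattern. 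Pushing this support analysis through, together with minimality of $R'$, should pin $R^\ast$ down to $R_{\textsc{1-in-3}}$, after which Lemma~\ref{base-flip-rel} completes the reduction. I expect the combinatorics of this step to be the hardest part, since the weaker closure supplied by flip separability (closure under nested differences of flip sets, rather than full symmetric difference as in the affine case) must be squeezed to yield the rigid $1$-in-$3$ structure.
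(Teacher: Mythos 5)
Your overall plan (obtain the constants, pin $R$ down to a member of the $T_J$ family, finish with Lemmas~\ref{add-rel} and~\ref{base-flip-rel}) has the right shape, but it contains one false step and one missing core argument. The false step is your claim that a $0$-valid, $1$-valid, flip separable relation must be affine, so that the both-valid subcase ``cannot occur''. This is not true: consider
\[
R=\{(0,0,0,0),\,(1,1,1,1),\,(1,1,0,0),\,(0,0,1,1),\,(1,0,1,0),\,(0,1,0,1)\}.
\]
It is $0$- and $1$-valid and not affine, since it has $6$ tuples while affine relations have size a power of $2$ (equivalently, $(1,1,0,0)\oplus(1,0,1,0)\oplus(0,0,0,0)=(0,1,1,0)\notin R$). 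Yet it is flip separable: for each of its tuples the flip sets are $\emptyset$, the full set $\{1,2,3,4\}$, and four $2$-element sets forming two complementary pairs, so in any nested pair $S_1\subset S_2$ of flip sets either $S_1=\emptyset$ or $S_2=\{1,2,3,4\}$, and in both cases $S_2\setminus S_1$ is again a flip set. Hence the both-valid case is genuinely present; the paper handles it as in Lemma~\ref{affine-hard}, picking a tuple $(0\zd 0,1\zd 1)\notin R$, identifying variables to get a binary relation containing $(0,0)$ and $(1,1)$, and using flip separability (in place of affineness) to force that relation to be equality, after which Lemma~\ref{add-const} applies. Your parallel claim in the $0$-valid-only case, that substituting $0$'s ``exposes the equality relation'', is likewise asserted without justification (and is unnecessary: the paper proceeds with $C_0$ alone there).

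More seriously, the crux of your proof --- the classification stating that a minimal-arity non-affine flip-separable ``core'' must be $R_{\textsc{1-in-3}}$ up to permuting and complementing coordinates --- is exactly what you leave unproven; you say the support analysis ``should'' pin $R^\ast$ down, but nothing in the sketch shows how the weak closure supplied by flip separability yields this rigid structure, and that step is the entire mathematical content of the lemma. The paper never proves, nor needs, such a classification. Instead it imports implementation results (Lemmas 5.26 and 5.30 of~\cite{Creignou01:book}) which supply, after substituting constants (in the $0$-/$1$-valid cases) or via a gadget over $\{R,\ne,C_0,C_1\}$ (in the remaining case), a ternary relation containing $(0,0,0),(0,1,1),(1,0,1)$ and missing $(1,1,0)$; flip separability is then used only for the easy final step of showing that this relation is exactly $\{(0,0,0),(0,1,1),(1,0,1)\}$ (a $T_J$ with permuted coordinates), respectively that the relevant $6$-ary solution relation is $T_6$. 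Note also that the paper's neither-$0$-nor-$1$-valid case requires a further split on whether $R$ is Horn, the Horn subcase being shown to be contradictory --- a complication your outline does not anticipate. As written, your proposal is a program whose hardest step is conjectured rather than proved, together with one step that is wrong.
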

\begin{proof}
Assume first that $R$ is 0-valid. Then we can assume that the
relation $C_0$ is available. Indeed, if $R$ is not 1-valid then
$R(x\zd x)$ is this relation. If $R$ is also 1-valid then we can
get $C_0$ and $C_1$ as in the proof of Lemma~\ref{affine-hard}.
Note that a relation obtained from a flip separable relation by
substituting constants and identifying coordinates is again flip
separable. It is shown in the proof of Lemma 5.30
of~\cite{Creignou01:book} that, since $R$ is not affine, it is
possible to permute coordinates in $R$ in such a way that the
relation $R'(x,y,z)$ defined as $R(0\zd 0,x\zd x,y\zd y,z\zd z)$
satisfies the following properties:
\begin{itemize}
\item The tuples $(0,0,0),(0,1,1),(1,0,1)$ belong to $R'$, and
\item the tuple $(1,1,0)$ does not belong to $R'$.
\end{itemize}
Since $R'$ is flip separable, it is easy to see that if one of the tuples from
$R_{\textsc{1-in-3}}$ belongs to $R'$ then all of them do. However, in
this case $(1,0,0)\in R'$ has flip
sets $\{1\}$ and $\{1,2\}$, and therefore $\{2\}$ should also be  a flip set
for this tuple, which is impossible because $(1,1,0)\not\in R'$. It follows
that $R'\cap R_{\textsc{1-in-3}}=\emptyset$. If the tuple $(1,1,1)$ was in
$R'$ then it would have flip sets $\{1,2,3\}$ and $\{1\}$, and hence $\{2,3\}$
as well, which is impossible. It follows that $R'=\{(0,0,0),(0,1,1),(1,0,1)\}$.
It is easy to see that $R'$ is obtained from $T_{\{2,3,4\}}$ (as in
Lemma~\ref{base-flip-rel}) by a cyclic permutation of coordinates.
The lemma  now follows from Lemmas~\ref{add-rel} and~\ref{base-flip-rel}.

If $R$ is 1-valid then a similar reasoning shows that one can obtain the
relation $\{(1,1,1),(0,0,1),(0,1,0)\}$ which is the relation $T_{\{3,4,5\}}$
with permuted coordinates.

Assume now that $R$ is neither 0- nor 1-valid. As in the proof of
Lemma~\ref{affine-hard}, one can show that both $C_0$ and $C_1$ can be assumed
to be available. It is shown in the proof of Lemma 5.30
of~\cite{Creignou01:book} that if $R$ is Horn then there is an instance of
$\CSP{\{R,C_0,C_1\}}$ over variables $\{x,y,z,u_0,u_1\}$ which contains
constraints $C_0(u_0)$ and $C_1(u_1)$, and such that the projection on $x,y,z$
of the set of all solutions to the instance contains
$(0,0,0),(0,1,1),(1,0,1)$, but does not contain $(1,1,0)$. This set must be a
Horn relation. On the other hand, we showed above that this set must coincide
with $\{(0,0,0),(0,1,1),(1,0,1)\}$, which is not Horn, a contradiction. If $R$
is not Horn, then it is easy to show (see the proof of the Lemma~5.26
of~\cite{Creignou01:book}) that there is an instance of
$\CSP{\{R,C_0,C_1\}}$ over variables $\{x,y,u_0,u_1\}$
which contains constraints $C_0(u_0)$ and $C_1(u_1)$, and
such that the projection on $x,y$ is not $\{0,1\}^2$ and
contains the relation $\ne$. Since this set must be a flip separable relation,
it is equal to $\ne$. Again, it is shown in the proof of Lemma 5.30
of~\cite{Creignou01:book} that there exists an instance of
$\CSP{\{R,\ne,C_0,C_1\}}$ over variables $\{x,y,z,x',y',z',u_0,u_1\}$ which
contains constraints $x\ne x'$, $y\ne y'$, $z\ne z'$, $C_0(u_0)$ and
$C_1(u_1)$, and such that the projection on $x,y,z$ of the set $R''$ of all
solutions to the instance contains $(0,0,0),(0,1,1),(1,0,1)$, but does not
contain $(1,1,0)$.
Since $R''$ is flip separable, it is easy to show (as above) that, in fact,
the projection of $R''$ onto $x,y,z,x',y',z'$ is equal to $T_6$ (with permuted
coordiantes). Now the result follows from by Lemmas~\ref{add-rel}
and~\ref{base-flip-rel}.
\qed\end{proof}

\end{document}